\newtheorem{definition}{Definition}
\newtheorem{theorem}{Theorem}
\newtheorem{lemma}{Lemma}
\newtheorem{corollary}{Corollary}
\newtheorem{proposition}{Proposition}
\newtheorem{assumption}{Assumption}
\newtheorem{remark}{Remark}
\theoremstyle{plain}
\newcommand{\part}[2]{\frac{\partial #1}{\partial #2}}
\newcommand{\X}{\mathcal{X}}
\newcommand{\Y}{\mathcal{Y}}
\newcommand{\R}{\mathbb{R}}
\newcommand{\E}{\mathbb{E}}
\newcommand{\pc}{\mathcal{P}}
\newcommand{\norm}[1]{\left\Vert{#1}\right\Vert}
\begin{document}
\title{A Revisit to Rate-distortion Theory\\
via Optimal Weak Transport}

\author{
 \IEEEauthorblockN{Jiayang Zou, Luyao Fan, Jiayang Gao and Jia Wang}
 \IEEEauthorblockA{Department of Electronic Engineering \\
  Shanghai Jiao Tong University\\
  Shanghai, China\\
                   Email: qiudao@sjtu.edu.cn, fanluyao@sjtu.edu.cn, gjy0515@sjtu.edu.cn, jiawang@sjtu.edu.cn}
                    }


\maketitle


\begin{abstract}
    This paper revisits the rate-distortion theory from the perspective of optimal weak transport, as recently introduced by Gozlan et al. While the conditions for optimality and the existence of solutions are well-understood in the case of discrete alphabets, the extension to abstract alphabets requires more intricate analysis. Within the framework of weak transport problems, we derive a parametric representation of the rate-distortion function, thereby connecting the rate-distortion function with the Schr\"odinger bridge problem, and establish necessary conditions for its optimality. As a byproduct of our analysis, we reproduce K. Rose's conclusions regarding the achievability of Shannon lower bound concisely, without reliance on variational calculus.


\end{abstract}

\section{Introduction}
Rate-distortion (RD) theory is a fundamental branch of information theory, primarily concerned with the problem of compressing data at the minimum coding rate subject to a given distortion constraint. Initially proposed by Claude E. Shannon in his groundbreaking works in 1948 \cite{shannon1948mathematical} and 1959 \cite{shannon1959coding}, this theory has since become a cornerstone of modern data compression and communication system design. 
RD problems focus on designing a coding scheme that minimizes the coding rate, while satisfying a specific distortion level for a given source and distortion measure.

The classical rate-distortion problems for finite-alphabet sources and continuous sources defined in Euclidean spaces have been extensively studied in \cite{berger2003rate,blahut1972computation,blahut1987principles,cover2006elements}. As research advanced, scholars began to explore more complex problems involving abstract alphabets. A key contribution was made by I. Csisz\'ar, who extended RD theory to abstract alphabets \cite{csiszar1974extremum}. He rigorously derived the parametric representation of the RD function, laying a solid theoretical foundation for numerical algorithms \cite{blahut1972computation,arimoto1972algorithm}.
Subsequently, F. Rezaei et al. \cite{rezaei2006rate} significantly advanced Csisz\'ar's results by employing techniques such as Stone-C\v{e}ch compactification and fixed-point methods in locally convex topological vector spaces. These approaches eliminated the assumption of compactness for the target space. Moreover, they provided a comprehensive characterization of the optimal reconstruction under these generalized conditions. 
Further contributions were made by E. Riegler et al. \cite{riegler2018rate, riegler2023lossy}, who focused on lossy compression theory for sources defined on compact manifolds and fractal sets. By leveraging results related to the rate-distortion dimension \cite{kawabata1994rate}, they derived the Shannon lower bound (SLB) in more general settings, thereby extending the applicability of RD theory to more complex source structures.
In terms of the parametric representation of the RD function, V. Kostina and E. Tuncel \cite{kostina2019successive} avoided reliance on the Karush-Kuhn-Tucker (KKT) optimality conditions. Instead, they streamlined I. Csisz\'ar's argument by utilizing the Donsker-Varadhan characterization of minimum relative entropy \cite{donsker1975asymptotic}. Their work significantly advanced the theory of successive refinement for abstract sources, thereby generalizing the earlier results of Equitz and Cover \cite{equitz1991successive}.

In 2013, Robert M. Gray highlighted the connections between optimal transport (OT) and information theory \cite{gray2013transportation}, offering a geometric view of source coding and a brief survey of optimal transport from a quantization perspective. Since then, the relationship between OT and RD theory has gradually garnered attention. However, conventional optimal transport theory is challenging to apply directly to RD problems due to fundamental differences in their problem settings. Conventional optimal transport minimizes transportation cost between probability distributions with marginal constraints, while RD theory balances coding rate and distortion under a disintegration framework. To bridge these differences, an entropy regularization term have been introduced to establish equivalence within the framework of entropic optimal transport in \cite{yang2024estimating}. In fact, optimal weak transport (OWT) theory, as a generalization of OT, naturally links to RD problems through optimizing the disintegration of the target measure with respect to the source measure. This framework enables a more concise reproduction of several key results in RD theory.

In this paper we conduct a study of revisiting the RD theory via OWT, which was recently introduced by Gozlan et al. \cite{gozlan2017kantorovich}. We derive a parametric representation that connects the RD function with the Schr\"odinger problem \cite{leonard2013survey} within the framework of weak transport problems. Additionally, we significantly streamline the proof of the necessary optimality conditions for the RD function using the existence theorem of OWT problems \cite{backhoff2019existence,backhoff2022applications}. This approach provides a concise method to reproduce K. Rose's conclusions regarding the achievability of the Shannon lower bound \cite{rose1994mapping}, avoiding reliance on variational calculus and circumventing the potential risks of a lack of rigor that may arise in the use of variational methods, which typically entails more intricate and complex analysis to ensure rigor.

\section{Background and Problem Setup}

\subsection{Fundamentals of Weak Transport Theory}
Optimal transport theory studies the transportation of mass between two 
probability measures. Given two probability measures $\mu$ and $\nu$ on Polish spaces   (complete separable metric spaces) $(\mathcal{X},d_{\X}),(\mathcal{Y},d_{\Y})$, the optimal transport problem is to find a
transport plan $\pi$ that minimizes the transportation cost
\begin{equation}
    \inf_{\pi\in\Pi(\mu,\nu)}\int_{\mathcal{X}\times\mathcal{Y}}c(x,y)d\pi(x,y)
\end{equation}
where $\Pi(\mu,\nu)$ is the set of all transport plans between $\mu$ and $\nu$, and $c(x,y)$ is the cost of transporting mass from $x$ to $y$. The optimal transport problem has been widely studied in mathematics, and has applications in various fields, such as machine learning\cite{courty2016optimal,hamri2021regularized,arjovsky2017wasserstein,chen2024multi}, statistics\cite{talagrand1996transportation,cao2024connecting,gunsilius2021matching}, and information theory\cite{wibisono2018convexity,courtade2019transportation,bai2022optimal,bai2023information}.  Readers are referred to \cite{villani2009optimal} for background on optimal transport.

Weak transport theory was initiated by Gozlan et al.\cite{gozlan2017kantorovich} motivated by applications in geometric inequalities. Their contribution sparked extensive research activity among various groups and led to applications in numerous fields in \cite{gozlan2018characterization,gozlan2020mixture,backhoff2019existence,backhoff2020weak,backhoff2022applications,backhoff2022stability,acciaio2021weak}. Throughout $\X$ and $\Y$ denote Polish spaces. Given probability measures $\mu\in\pc(\X),\nu\in\pc(\Y)$ we write $(\mu,\nu)$ for the set of all couplings on $\X\times\Y$ with marginals $\mu$ and $\nu$. Given a coupling $\pi$ on $\X\times\Y$ , we denote a regular disintegration with respect to the first marginal by $(\pi_x),x\in\X$. We consider cost functionals of the form
\begin{equation}
    C:\X\times\pc(\Y)\to\R\cup\{+\infty\}
\end{equation}
where $C$ is lower bounded and lower semicontinuous, and $C(x,\cdot)$ is assumed to be convex on $\pc(\Y)$ for all $x\in\X$. The weak transport problem is to find a coupling $\pi$ that minimizes the cost functional
\begin{equation}\label{eq:wot}
    V_C(\mu,\nu)=\inf_{\pi\in\Pi(\mu,\nu)}\int_{\X}C(x,\pi_x)d\mu(dx).
\end{equation}
The classical transport problem is included via $C(x,p)=\int c(x,y) dp(y)$ for some cost function $c:\X\times\Y\to\R\cup\{+\infty\}$. For $t\geqslant 1$, $\pc^t_{d_{\Y}}(\Y)$, which, for convenience, can be abbreviated as $\pc_t(\Y)$, denotes the set of Borel probability measures with finite $t$-th moment for some fixed metric $d_\Y$ (compatible with the topology on $\Y$), that is, a Borel probability measure $\nu$ is in $\pc_t(\Y)$ if and only if for some $y_0 \in \Y$ we have
$$\int_{\Y}d_\Y (y,y_0)^t \nu(dy)<\infty.$$
The set of continuous functions on $\Y$ which are dominated by a multiple of $1+d_\Y (y,y_0)^t$, is denoted by $\Phi_t(\Y)$. We equip the set of probability measures $\pc_t(\Y)$ with the $t$-th Wasserstein topology. Specifically, a sequence $(\nu_k)_{k\in\mathbb{N}}$ converges to $\nu\in\pc_t(\Y)$ if $\nu_k (f)\coloneqq \int f d\nu_k$ converges to $\nu(f)$ for all $f\in \Phi_t(\Y)$. The space $\pc(\Y)$ itself is equipped with the usual weak topology. The same conventions apply to $\X$ instead of $\Y$.

\begin{definition}{\bm{$(A)$}}\label{def:(A)}
    A cost functional $C:\X\times\pc_t(\Y)\to\R\cup\{+\infty\}$ satisfies property (A) if and only if
    \begin{enumerate}
        \item $C$ is lower semicontinuous with respect to the product topology on $\X\times\pc_t(\Y)$,
        \item C is bounded from below,
        \item the map $p\mapsto C(x,p)$ is convex, that is, for all $x\in\X$ and all $p,q\in\pc_t(\Y)$, $\lambda\in [0,1]$, we have
        $$C(x,\lambda p+(1-\lambda) q)\leqslant \lambda C(x,p)+(1-\lambda) C(x,q).$$
    \end{enumerate}
\end{definition}

\begin{theorem}\label{thm:existence-1}[Theorem 1.1 in \cite{backhoff2019existence}]
Assume that $C:\X\times\pc(\Y)\to\R\cup\{+\infty\}$ is jointly lower semicontinuous, bounded from below and convex in the second argument. Then the problem
$$\inf_{\pi\in\Pi(\mu,\nu)} \int_{\X}C(x,\pi_x) \mu(dx)$$
admits a minimizer.
\end{theorem}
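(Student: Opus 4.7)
The plan is to apply the direct method of the calculus of variations: establish weak compactness of $\Pi(\mu,\nu)$, extract a subsequential limit of a minimizing sequence, and verify lower semicontinuity of the weak transport functional at the limit.

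For compactness, I would note that the marginals $\mu\in\pc(\X)$ and $\nu\in\pc(\Y)$ are fixed; singletons being tight, a standard argument using the projection maps to $\X$ and $\Y$ shows that $\Pi(\mu,\nu)$ is tight in $\pc(\X\times\Y)$. Weak closedness follows since the marginal constraints are tested against bounded continuous functions of $x$ or $y$ alone. Prokhorov's theorem then gives weak compactness, so any minimizing sequence $(\pi^n)$ admits a weakly convergent subsequence with limit $\pi^{\star}\in\Pi(\mu,\nu)$.

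The heart of the argument is to show that the map
\[
J(\pi)\;:=\;\int_{\X} C(x,\pi_x)\,\mu(dx)
\]
is weakly lower semicontinuous. My plan is to dualize $C(x,\cdot)$: since $\pc(\Y)$ is a convex subset of a locally convex space with topological dual generated by $C_b(\Y)$, the Fenchel-Moreau theorem represents each convex, lower semicontinuous, bounded-below $C(x,\cdot)$ as the pointwise supremum of its continuous affine minorants. Using joint lower semicontinuity of $C$ and a measurable selection applied to its epigraph, this can be upgraded to a countable representation
\[
C(x,p)\;=\;\sup_{k\in\BN}\Bigl\{\int_{\Y}\varphi_k(x,y)\,dp(y)+\beta_k(x)\Bigr\},
\]
with $\varphi_k\in C_b(\X\times\Y)$ and $\beta_k\in L^1(\mu)$. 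Disintegrating then yields $J(\pi)=\sup_k\bigl(\int\varphi_k\,d\pi+\int\beta_k\,d\mu\bigr)$, which is a supremum of weakly continuous functionals of $\pi$ and hence weakly lower semicontinuous. Combining the three ingredients gives $J(\pi^{\star})\leqslant\liminf_n J(\pi^n)=\inf J$, so $\pi^{\star}$ attains the infimum.

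The hard part will be the upgrade from a pointwise Fenchel-Moreau representation to a globally jointly continuous one; plain duality provides affine minorants pointwise in $x$, but extracting a countable family with joint continuity in $(x,y)$ relies essentially on joint lower semicontinuity of $C$, typically via a Kuratowski--Ryll-Nardzewski selection on the epigraph combined with the separability of $C_b(\Y)$ on compacts. A clean alternative avoiding duality altogether is a Koml\'os-type argument: replace the minimizing sequence by convex combinations $\tilde\pi^n$ whose disintegrations $\tilde\pi^n_x$ converge weakly to $\pi^{\star}_x$ for $\mu$-a.e.\ $x$, so that joint lower semicontinuity of $C$ and Fatou's lemma close the argument directly.
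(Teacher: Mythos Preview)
First, note that the paper does not supply its own proof of this theorem: it is quoted verbatim as Theorem~1.1 of \cite{backhoff2019existence} and used as a black box. So there is no in-paper argument to compare against; I assess your outline on its own terms and against what the cited source actually does.

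Your duality route has a real gap at the step ``Disintegrating then yields $J(\pi)=\sup_k\bigl(\int\varphi_k\,d\pi+\int\beta_k\,d\mu\bigr)$''. From a representation $C(x,p)=\sup_k\{\int\varphi_k(x,y)\,dp(y)+\beta_k(x)\}$ one only obtains
\[
J(\pi)\;=\;\int_{\X}\sup_{k}\Bigl(\int_{\Y}\varphi_k(x,y)\,\pi_x(dy)+\beta_k(x)\Bigr)\,\mu(dx),
\]
and exchanging the outer integral with the supremum forces the index $k=k(x)$ to be a measurable selection. Once that happens, the test function $(x,y)\mapsto\varphi_{k(x)}(x,y)$ is merely Carath\'eodory, not in $C_b(\X\times\Y)$, so $\pi\mapsto\int\varphi_{k(\cdot)}\,d\pi$ is no longer weakly continuous and the lower semicontinuity conclusion breaks down. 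This is exactly the obstruction that makes existence for weak transport non-trivial: the disintegration map $\pi\mapsto(\pi_x)_{x}$ is not weakly continuous, and linear duality on $\pc(\Y)$ alone cannot compensate. Your Koml\'os alternative is closer in spirit to a workable proof, but the asserted step---that suitable forward convex combinations have disintegrations converging weakly $\mu$-a.e.---is itself the entire difficulty and does not follow from the scalar Koml\'os theorem; it requires a Young-measure (measure-valued) version. The proof in \cite{backhoff2019existence} sidesteps both issues by \emph{lifting}: each $\pi\in\Pi(\mu,\nu)$ is sent to the law $\hat\pi$ of $(x,\pi_x)$ on $\X\times\pc(\Y)$, so that $J(\pi)=\int C\,d\hat\pi$ becomes a linear, lower semicontinuous functional of $\hat\pi$; tightness of the lifted family is then deduced from tightness of $\mu$ and $\nu$ together with the intensity constraint $\int p\,\hat\pi(dx,dp)=\nu$, and the direct method applies cleanly on the lifted space.
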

However, the cost function $C$ may not be lower semicontinuous with respect to weak convergence in many cases, such as $\rho(x,y)=\norm{x-y}^2$ for $\X=\Y=\R^n$, hence we will need to employ a refined version of Existence Theorem \ref{thm:existence-1} to carry out applications in general cases.

\begin{theorem}{\textbf{(Existence and semicontinuity)}}\label{thm:existence}[Theorem 3.2 in \cite{backhoff2022applications}]
   If the cost function $C$ satisfies property $(A)$, the infimum in \eqref{eq:wot} is attained and the value $V_C(\mu,\nu)$ depends in a lower semicontinuous way on the marginals $(\mu,\nu)\in\pc(\X)\times \pc_t(\Y)$.
\end{theorem}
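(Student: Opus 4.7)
The plan is to prove both conclusions simultaneously by showing that whenever $(\mu_n, \nu_n) \to (\mu, \nu)$ in $\pc(\X) \times \pc_t(\Y)$ and $\pi_n \in \Pi(\mu_n, \nu_n)$ is asymptotically optimal (for the existence part one takes $\mu_n \equiv \mu$, $\nu_n \equiv \nu$ with a minimizing sequence), any weak subsequential limit $\pi$ of $(\pi_n)$ lies in $\Pi(\mu,\nu)$ and satisfies
\begin{equation*}
    \int_{\X} C(x, \pi_x)\, \mu(dx) \;\leqslant\; \liminf_{n \to \infty} \int_{\X} C(x, (\pi_n)_x)\, \mu_n(dx).
\end{equation*}
Specialising to constant sequences then delivers existence of a minimizer, while the general case delivers joint lower semicontinuity of $V_C$.

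First I would establish tightness of $(\pi_n)$. Since $\nu_n \to \nu$ in $\pc_t(\Y)$, the $t$-th moments $\int d_\Y(y, y_0)^t\, \nu_n(dy)$ are uniformly bounded and in fact convergent, so $\{\nu_n\}$ is tight; combined with Prokhorov's tightness of $\{\mu_n\}$, this yields a weakly convergent subsequence $\pi_{n_k} \rightharpoonup \pi$ in $\pc(\X \times \Y)$, and continuity of marginal projections under weak convergence ensures $\pi \in \Pi(\mu, \nu)$.

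The crux is the lower bound on $J(\pi) := \int C(x, \pi_x)\, \mu(dx)$, and this is where the main obstacle appears: property $(A)$ provides lower semicontinuity of $p \mapsto C(x, p)$ only in the $t$-th Wasserstein topology on $\pc_t(\Y)$, whereas weak convergence of $\pi_n$ does not in general propagate to Wasserstein convergence of its disintegrations. My strategy for overcoming this is to exploit the convexity hypothesis through Fenchel--Moreau duality on $\pc_t(\Y)$, whose continuous dual is spanned by $\Phi_t(\Y)$, to represent
\begin{equation*}
    C(x, p) \;=\; \sup_{\phi \in \Phi_t(\Y)} \Bigl\{ \int \phi(y)\, p(dy) \;-\; C^{\ast}(x, \phi) \Bigr\},
\end{equation*}
and then, after a measurable-selection argument to commute the supremum with the outer integral, to obtain
\begin{equation*}
    J(\pi) \;=\; \sup_{\phi} \Bigl\{ \int \phi(x, y)\, \pi(dx, dy) \;-\; \int C^{\ast}\bigl(x, \phi(x, \cdot)\bigr)\, \mu(dx) \Bigr\}.
\end{equation*}

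The remaining hurdle is that admissible test functions $\phi \in \Phi_t$ may grow like $d_\Y^t$ and are therefore unbounded, so $\pi \mapsto \int \phi\, d\pi$ is not weakly lower semicontinuous as it stands. I would resolve this by a truncation scheme: approximate each admissible $\phi$ by bounded continuous cutoffs $\phi_R$, control the tail $\int (\phi - \phi_R)\, d\pi_n$ using the uniform $t$-th moment estimate inherited from $\nu_n \to \nu$ in $\pc_t(\Y)$, and finally let $R \to \infty$. Each truncated functional $\pi \mapsto \int \phi_R\, d\pi$ is weakly lower semicontinuous, so $J$ emerges as a supremum of weakly lower semicontinuous functionals and inherits lower semicontinuity. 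This delivers the desired liminf inequality and concludes both the existence and the joint lower semicontinuity assertions.
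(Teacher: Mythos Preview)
The paper does not prove this statement at all: it is imported verbatim as Theorem~3.2 of \cite{backhoff2022applications} and used as a black box, so there is no in-paper argument to compare against. The proof in the cited reference proceeds by \emph{lifting} rather than by duality: each coupling $\pi$ is sent to the probability measure $\hat\pi=(\mathrm{id},x\mapsto\pi_x)_\#\mu$ on $\X\times\pc_t(\Y)$, tightness of $\{\nu_n\}$ in $\pc_t(\Y)$ is upgraded to tightness of $\{\hat\pi_n\}$ on the product space, the joint lower semicontinuity and lower boundedness of $C$ give $\liminf_n\int C\,d\hat\pi_n\geqslant\int C\,d\hat\pi$ for any weak limit $\hat\pi$, and finally convexity of $C(x,\cdot)$ is invoked through Jensen's inequality to project $\hat\pi$ (which need not arise from a coupling) back down to an admissible $\pi\in\Pi(\mu,\nu)$ without increasing the cost.

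Your Fenchel--Moreau route is conceptually reasonable but has two genuine gaps. First, the measurable selection you use to commute the supremum with $\int\cdots\,d\mu$ produces only \emph{measurable} test functions $\phi(x,\cdot)$, whereas passing to the limit in $\int\phi\,d\pi_n$ under weak convergence requires joint \emph{continuity} of $\phi$ in $(x,y)$; your truncation $\phi_R$ handles unboundedness but not this lack of regularity, so the liminf inequality for the first term is not justified. Second, for the joint lower semicontinuity in $(\mu,\nu)$ you must also control $-\int C^\ast(x,\phi(x,\cdot))\,\mu_n(dx)$ as $\mu_n\to\mu$, but $C^\ast(x,\psi)=\sup_{p}\{\langle p,\psi\rangle-C(x,p)\}$ is a pointwise supremum of functions that are merely upper semicontinuous in $x$ (since $C$ is only lower semicontinuous), and such a supremum is in general neither upper nor lower semicontinuous; hence this term need not behave well under weak convergence of $\mu_n$. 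The lifting approach sidesteps both issues by working directly with $C$ on the enlarged space $\X\times\pc_t(\Y)$, where property~$(A)$ applies verbatim.
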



\subsection{A Sketch of The Sch\"odinger Problem}

The (static) Sch\"odinger problem (SP), also called the Sch\"odinger bridge problem, was addressed in \cite{schrodinger1931umkehrung} in 1931. While the problem is historically old, it has  attracted significant attention in the past decades due to its deep connection with various fields, especially (entropic) optimal transport theory. In modern terminology, SP can be expressed as follows. For two Polish spaces $\X,\Y$ with given probability measures $\mu\in\pc(\X),\nu\in\pc(\Y)$, and a given reference probability measure $\gamma\in\pc(\X\times\Y)$. We denote by $\Pi(\mu,\nu)\subset \pc(\X\times\Y)$ the set of couplings; namely, the set of all $\pi\in\pc(\X\times\Y)$ satisfying 
\begin{equation}
    \begin{aligned}
        \int_{\X\times\Y}f(x)\pi(dx,dy)&=\int_{\X}f(x)\mu(dx)\\
        \int_{\X\times\Y}g(y)\pi(dx,dy)&=\int_{\Y}g(y)\nu(dy)
    \end{aligned}
\end{equation}
for all bounded measurable functions $f,g$. The Schr\"odinger problem is to find a coupling $\pi\in\Pi(\mu,\nu)$ that minimizes the relative entropy with respect to the reference measure $\gamma$:
\begin{equation}
    \begin{aligned}
        H(\pi|\gamma)=\begin{cases}
        \int_{\X\times\Y} \log \dfrac{d\pi}{d\gamma}d\pi, & \text{if}\; \pi\ll\gamma\\
        +\infty, & \text{otherwise}
        \end{cases}
    \end{aligned}
\end{equation}
This problem can be seen as a special case of OWT problems. Assume that the value $\inf\limits_{\pi\in\Pi(\mu,\nu)}H(\pi|\gamma)$ is finite, then OWT guarantees the existence of minimizer and provides the structure of optimizers in the SP. 
\begin{lemma}\label{lemma: explicit expression}
    [Corollary 4.3 in \cite{backhoff2022applications}]
Let $\mu\in\pc(\X)$, $\nu\in\pc(\Y)$ and $\gamma\in\pc(\X\times\Y)$ a probability measure equivalent to $\mu\times\nu$. Assume that the value of the corresponding entropic optimal transport problem is finite, that is
\begin{equation}
    \inf_{\pi\in\Pi(\mu,\nu)}H(\pi|\gamma)<+\infty.
\end{equation}
A coupling $\pi^\star$ minimizes the above problem if and only if $H(\pi^\star|\gamma)<\infty$ and there exist measurable functions $f,g$ such that 
$$\dfrac{d\pi^\star}{d\gamma}(x,y)=f(x)g(y), \gamma-a.e.$$
\end{lemma}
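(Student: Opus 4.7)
The plan is to prove both implications of the biconditional. The easier ``if'' direction follows from the chain rule for relative entropy. For the harder ``only if'' direction, I would recast the Schr\"odinger problem as an OWT problem in order to apply Theorem~\ref{thm:existence} for existence of an optimizer, and then identify the product structure of its density via a convex-perturbation argument coupled with Kantorovich duality.

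For the sufficient direction, assume $H(\pi^\star|\gamma)<\infty$ and $d\pi^\star/d\gamma(x,y)=f(x)g(y)$. Any admissible competitor $\pi\in\Pi(\mu,\nu)$ with $H(\pi|\gamma)<\infty$ satisfies $\pi\ll\gamma\sim\mu\times\nu$, hence $\pi\ll\pi^\star$, and the chain rule for relative entropy yields
\begin{equation*}
    H(\pi|\gamma)-H(\pi^\star|\gamma)=H(\pi|\pi^\star)+\int\log\tfrac{d\pi^\star}{d\gamma}\,d(\pi-\pi^\star).
\end{equation*}
Since $\log(f(x)g(y))=\log f(x)+\log g(y)$ separates the variables, and $\pi,\pi^\star$ share the marginals $(\mu,\nu)$, the rightmost integral vanishes, leaving $H(\pi|\gamma)-H(\pi^\star|\gamma)=H(\pi|\pi^\star)\geqslant 0$.

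For the necessary direction, disintegrate $\gamma=\gamma_1\otimes(\gamma_x)_{x\in\X}$ along its first marginal. Because $\gamma\sim\mu\times\nu$ we have $\gamma_1\sim\mu$, and for any $\pi=\mu\otimes\pi_x\in\Pi(\mu,\nu)$ with $\pi\ll\gamma$ the entropy decomposes as
\begin{equation*}
    H(\pi|\gamma)=H(\mu|\gamma_1)+\int_\X H(\pi_x|\gamma_x)\,d\mu(x).
\end{equation*}
The first term is constant, so the Schr\"odinger problem reduces to the OWT problem with cost $C(x,p):=H(p|\gamma_x)$, which enjoys property~$(A)$ of Definition~\ref{def:(A)} (nonnegativity, convexity of $p\mapsto H(p|\gamma_x)$, and joint lower semicontinuity are standard properties of relative entropy). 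Theorem~\ref{thm:existence} then supplies a minimizer $\pi^\star$ with $H(\pi^\star|\gamma)<\infty$. To identify its density, for any admissible $\pi$ introduce $\pi_\epsilon:=(1-\epsilon)\pi^\star+\epsilon\pi$; convexity of $t\mapsto t\log t$ makes the difference quotient $\epsilon^{-1}[H(\pi_\epsilon|\gamma)-H(\pi^\star|\gamma)]$ monotone in $\epsilon$, and monotone convergence yields the first-order condition
\begin{equation*}
    \int\log\tfrac{d\pi^\star}{d\gamma}\,d(\pi-\pi^\star)\geqslant 0.
\end{equation*}
Hence $\pi^\star$ is optimal for the \emph{linear} OT problem with cost $c(x,y):=-\log(d\pi^\star/d\gamma)$, and Kantorovich duality produces potentials $\varphi,\psi$ with $\varphi(x)+\psi(y)=-c(x,y)$ on the support of $\pi^\star$. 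Setting $f:=e^\varphi$ and $g:=e^\psi$, extended by zero outside the projections of the support of $\pi^\star$, gives the claimed product decomposition $\gamma$-a.e.

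The main obstacle is the first-order condition: justifying the differentiation of $\epsilon\mapsto H(\pi_\epsilon|\gamma)$ under the integral requires controlling $\log(d\pi^\star/d\gamma)$ near its singularities. The convexity encoded in property~$(A)$ is precisely what supplies the monotone control on the difference quotient, thereby circumventing the regularity issues that would appear in a direct variational-calculus derivation. A secondary subtlety is invoking Kantorovich duality for the possibly unbounded cost $c$, where the hypothesis $\gamma\sim\mu\times\nu$ together with $H(\pi^\star|\gamma)<\infty$ ensures the integrability needed to construct the potentials.
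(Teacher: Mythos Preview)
The paper does not prove this lemma; it is simply quoted as Corollary~4.3 of \cite{backhoff2022applications}, so there is no in-paper argument against which to compare.

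On its own merits: your sufficiency argument is correct in spirit but glosses over two points---the assertion $\pi\ll\pi^\star$ requires $f(x)g(y)>0$ $\gamma$-a.e., and splitting $\int\log(fg)\,d(\pi-\pi^\star)$ into marginal integrals requires $\log f\in L^1(\mu)$ and $\log g\in L^1(\nu)$. Both can be extracted from the hypotheses with some additional care.

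The necessity argument has a genuine gap at the Kantorovich step. The first-order condition would make $\pi^\star$ optimal for a \emph{linear} OT problem with cost $c=-\log(d\pi^\star/d\gamma)$, but this cost is merely measurable and possibly $+\infty$-valued; existence of dual optimizers is not guaranteed without lower semicontinuity or an upper bound of the form $c(x,y)\leqslant a(x)+b(y)$, neither of which you establish (you flag this as a ``secondary subtlety'' but do not resolve it). More seriously, even granting potentials $\varphi,\psi$, duality only delivers $\varphi(x)+\psi(y)=c(x,y)$ on a set of full $\pi^\star$-measure, not full $\gamma$-measure. Your fix---extending $e^\varphi,e^\psi$ by zero off the projections of $\mathrm{supp}\,\pi^\star$---is vacuous: since $\pi^\star\in\Pi(\mu,\nu)$, those projections already carry full $\mu$- and $\nu$-mass, so the extension changes nothing and the identity $d\pi^\star/d\gamma=f(x)g(y)$ holds only $\pi^\star$-a.e. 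Upgrading to $\gamma$-a.e.\ amounts to showing $\pi^\star\sim\gamma$, which is part of the content to be proved. The argument in \cite{backhoff2022applications} avoids this circularity by working with weak-transport duality directly; the classical alternative (Csisz\'ar's I-projection theory, or R\"uschendorf--Thomsen) exploits that entropy minimizers under linear marginal constraints automatically have exponential-family densities $e^{\varphi(x)+\psi(y)}$ relative to the reference measure, bypassing linear Kantorovich duality altogether.
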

In fact, $f(x)$ and $g(y)$ in the lemma above are nonnegative and unique up to a multiplicative constant\cite{nutz2021introduction}. In the case where $d\gamma=K e^{-\beta\rho(x,y)}d\mu\times d\nu$ for a nonnegative measurable function $\rho(x,y)$, 
since $\pi^\star\in \Pi(\mu,\nu)$, we have $\int_{\Y}d\pi^\star=d\mu, \int_{\X}d\pi^\star=d\nu$, which can be written as Schr\"odinger equations\cite{nutz2021introduction}
\begin{equation}\label{Schr\"odinger equations}
    \begin{aligned}
       1&=Kf(x)\int_{\Y}g(y)e^{-\beta\rho(x,y)}d\nu\\ 
       &=Kg(y)\int_{\X}f(x)e^{-\beta\rho(x,y)}d\mu \\ 
       &=\int_{\X}\dfrac{g(y')e^{-\beta\rho(x,y')}}{\int_{\Y}g(y)e^{-\beta\rho(x,y)}d\nu}d\mu,\quad\text{$\forall y'\in\Y$}.
    \end{aligned}
\end{equation}
where
\begin{equation}
    \begin{aligned}
        K&=\left(\int_{\X\times\Y}e^{-\beta\rho(x,y)}d\mu d\nu\right)^{-1}\\ 
        &=\left(\int_{\X\times\Y}f(x)g(y)e^{-\beta\rho(x,y)}d\mu d\nu\right)^{-1}
    \end{aligned}
\end{equation}
In the following sections, we will recapitulate the RD theory, and investigate its connection to the Schr\"odinger problem within the framework of weak transport theory.

\subsection{A Recap of Rate-distortion Theory}
Given a source $X$ with probability distribution $p(x)$, and a distortion measure $\rho(x,y)$, the RD function $R(D)$ is defined as the minimum rate required to achieve a distortion level $D$,

\begin{equation}
    R(D)=\inf_{Y}I(X;Y)
\end{equation}

\noindent subject to the constraint

\begin{equation}
    \E[\rho(X,Y)]\leqslant D
\end{equation}
    where $I(X;Y)$ is the mutual information between $X$ and $Y$, and $Y$ is the reconstruction. I. Csisz\'ar proved the following basic properties about the rate-distortion function and its parametric representation under sufficiently weak assumptions in \cite{csiszar1974extremum}, providing a solid theoretical foundation for the RD theory:
    
    \begin{assumption}\label{assumption: distortion function basic requirement}
        Let $\rho:\X\times\Y\to [0,\infty]$ be a given measurable function, called a loss function, with
        \begin{equation}
            \inf_{y} \rho(x,y)=0, \forall x\in\X.
        \end{equation}
    \end{assumption}
    \begin{lemma}{}
        [Lemma 1.1 in \cite{csiszar1974extremum}]Under Assumption \ref{assumption: distortion function basic requirement}, the rate-distortion function $R(D)$ is a non-increasing convex function of $D$ and one may write
    \begin{equation}
        R(D)=\inf_{\E_{\pi}(\rho)=D} I(X;Y)
    \end{equation}
    where $D\leqslant D_{1}$, which is the smallest non-negative number such that $R(D)$ is constant in $(D_1,+\infty)$.
    \end{lemma}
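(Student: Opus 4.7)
The plan is to verify the three parts---monotonicity, convexity, and admissibility of the equality constraint for $D\leq D_1$---in sequence, each step building on the previous. Monotonicity is immediate from set inclusion: if $D\leq D'$, then $\{\pi\in\Pi(\mu,\cdot):\E_\pi(\rho)\leq D\}\subseteq\{\pi\in\Pi(\mu,\cdot):\E_\pi(\rho)\leq D'\}$, so the infimum of $I(X;Y)$ over the larger set is no larger, giving $R(D')\leq R(D)$.

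For convexity, I would use that $\pi\mapsto I(\pi)=H(\pi\,|\,\mu\otimes\pi_Y)$ is convex on the slice $\Pi(\mu,\cdot)$---a standard consequence of joint convexity of relative entropy, together with the observation that $\pi\mapsto\mu\otimes\pi_Y$ is affine when the first marginal is fixed. Given $D_0,D_1$, $\lambda\in[0,1]$, and $\epsilon>0$, I would pick near-optimal $\pi_i\in\Pi(\mu,\cdot)$ with $\E_{\pi_i}(\rho)\leq D_i$ and $I(\pi_i)\leq R(D_i)+\epsilon$, form the mixture $\pi_\lambda = \lambda\pi_0 + (1-\lambda)\pi_1$, and observe that it still has first marginal $\mu$, distortion at most $\lambda D_0+(1-\lambda)D_1$, and mutual information at most $\lambda I(\pi_0)+(1-\lambda)I(\pi_1)$. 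Sending $\epsilon\to 0$ yields $R(\lambda D_0+(1-\lambda)D_1)\leq \lambda R(D_0)+(1-\lambda)R(D_1)$.

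For the equality-constraint claim, I would first deduce that $R$ is \emph{strictly} decreasing on $[0, D_1)$: if $R(a)=R(b)$ for some $0\leq a<b<D_1$, then writing $b$ as a convex combination of $a$ and an arbitrary $c>b$ and combining convexity with $R$ being non-increasing forces $R(c)=R(b)$, contradicting minimality of $D_1$. Since $R$ is non-negative and vanishes precisely on $[D^*,\infty)$ where $D^*:=\inf_\nu\int\rho\,d(\mu\otimes\nu)$, we conclude $D_1=D^*$. For any $D\leq D_1$, I would choose $\nu_0\in\pc(\Y)$ with $M:=\int\rho\,d(\mu\otimes\nu_0)>D$; then, for any feasible $\pi$ with $\E_\pi(\rho)=D'\leq D$, take $\lambda=(M-D)/(M-D')\in(0,1]$ so that $\pi_\lambda:=\lambda\pi+(1-\lambda)(\mu\otimes\nu_0)$ has first marginal $\mu$, expected distortion exactly $D$, and (by convexity of $I$ together with $I(\mu\otimes\nu_0)=0$) mutual information at most $\lambda I(\pi)\leq I(\pi)$. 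Running $\pi$ along a minimizing sequence yields $\inf_{\E_\pi(\rho)=D} I\leq R(D)$; the reverse inequality is trivial.

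The main difficulty I foresee is the boundary case $D=D_1$, where one needs $\nu_0$ with $\int\rho\,d(\mu\otimes\nu_0)$ \emph{strictly} greater than $D_1$. This is generically available because $D_1$ is an infimum over $\nu$, but it can fail in the pathological situation where $y\mapsto\int\rho(x,y)\,d\mu(x)$ is $\mu$-a.e.\ constant---in which case every coupling has identical distortion and the lemma holds trivially. A minor secondary point is to cite the convexity of $I(\pi)$ in $\pi$ with fixed first marginal, which I would simply treat as a standard consequence of joint convexity of relative entropy.
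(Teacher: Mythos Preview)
The paper does not supply its own proof of this lemma; it is simply quoted as Lemma~1.1 of Csisz\'ar~\cite{csiszar1974extremum}, so there is nothing in the paper to compare against. Your outline is the standard argument and is essentially correct, but two points deserve attention.

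First, under Assumption~\ref{assumption: distortion function basic requirement} alone you are not guaranteed any $\nu_0$ with $M=\int\rho\,d(\mu\otimes\nu_0)<\infty$; that kind of integrability is closer to Assumption~\ref{assumption: existence of finite set}. A cleaner way to manufacture a coupling with distortion strictly above $D$, using only what you have already proved, is: for $D<D_1$ pick $D<D''<D_1$; by strict monotonicity, $R(D'')<R(D)$, so there exists $\pi_1$ with $\E_{\pi_1}\rho\leq D''$ and $I(\pi_1)<R(D)$; necessarily $\E_{\pi_1}\rho>D$ (otherwise $I(\pi_1)\geq R(D)$). Convex-combining this $\pi_1$ with a near-optimal $\pi_0$ for $R(D)$ gives $\pi_\lambda$ with $\E_{\pi_\lambda}\rho=D$ and $I(\pi_\lambda)\leq\lambda(R(D)+\epsilon)+(1-\lambda)R(D)\leq R(D)+\epsilon$. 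This avoids any appeal to a product coupling with finite cost. (If $R(D)=+\infty$ the equality-constraint claim is trivial.)

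Second, in your boundary discussion, ``$y\mapsto\int\rho(x,y)\,d\mu(x)$ is $\mu$-a.e.\ constant'' is ill-typed ($y$ ranges over $\Y$, not over the support of $\mu$), and even if that map is identically equal to $D^*$, this only forces \emph{product} couplings to have distortion $D^*$; general couplings need not, so the ``lemma holds trivially'' escape does not work as stated. The endpoint $D=D_1$ is genuinely the delicate case; it is usually handled via continuity of the convex function $R$ at interior points of its effective domain together with a separate limiting argument, rather than by the convex-combination trick.
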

    \begin{assumption}\label{assumption: existence of random variables}
        We assume the existence of random variables $\xi,\eta$ with $\xi \sim \mu$ such that
        \begin{equation}
            I(\xi;\eta)<\infty, \E\rho(\xi,\eta)<\infty.
        \end{equation}
    \end{assumption}
    This assumption can exclude the contingency that $R(D)$ is identically infinite.
    \begin{assumption}\label{assumption: existence of finite set}
       Suppose that there exists a finite set $B\subset \Y$ such that
       $$\int \rho(x,B) \mu(dx)<\infty\quad \text{where}\quad \rho(x,B)\coloneqq \min_{y\in B}\rho(x,y).$$
    \end{assumption}
    \noindent Define $D_{\min}\coloneqq \inf\{D\geqslant 0|R(D)<\infty\}$.
    \begin{theorem}\label{thm: R(D) finite}
        [Theorem 2.1 in \cite{csiszar1974extremum}]
        Under Assumption \ref{assumption: distortion function basic requirement} and \ref{assumption: existence of random variables}, suppose that to any $\epsilon>0$ there exists measurable subsets $A_1\subset A_2\subset \cdots$ of $\X$ and finite subsets $B_1\subset B_2\subset \cdots$ of $\Y$ such that $\mu(A_n)\to 1$ as $n\to\infty$ and 
        \begin{equation}\label{eq:assertion 1}
            \min_{y\in B_n}\rho(x,y)<\epsilon, \forall x\in A_n, n=1,2,\cdots
        \end{equation}
        Then $R(D)$ is finite for all $D>0$, namely, $D_{\min}=0$.
    \end{theorem}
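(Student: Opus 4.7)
The plan is to certify $R(D)<\infty$ for arbitrary $D>0$ by producing a single feasible joint law: an explicit $\pi\in\pc(\X\times\Y)$ with $\X$-marginal $\mu$, expected distortion at most $D$, and finite mutual information. The two ingredients I will combine are the finite codebooks $B_n\subset\Y$ on the ``good'' sets $A_n$ supplied by the hypothesis, which give cheap coverage in distortion at only a $\log|B_n|$ cost in entropy, and the reference pair $(\xi,\eta)$ from Assumption~\ref{assumption: existence of random variables}, which handles the tail $A_n^c$ with finite mutual information and finite expected distortion.

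Concretely, I would fix $D>0$, set $\epsilon:=D/2$, and then choose $n$ large enough that (\ref{eq:assertion 1}) holds and also $\E[\rho(\xi,\eta)\mathbf{1}_{A_n^c}(\xi)]<D/2$; the latter follows from $A_n\subset A_{n+1}$ with $\mu(A_n)\to 1$ combined with $\E\rho(\xi,\eta)<\infty$ via dominated convergence. Enumerating $B_n=\{y_1,\ldots,y_{k_n}\}$, let $\phi_n(x)$ be the $y_i$ of smallest index achieving $\min_{y\in B_n}\rho(x,y)$; this is Borel measurable and satisfies $\rho(x,\phi_n(x))<\epsilon$ on $A_n$. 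Define the reconstruction piecewise: on $\{X\in A_n\}$ set $Y:=\phi_n(X)$, and on $\{X\in A_n^c\}$ draw $Y$ from the regular conditional law of $\eta$ given $\xi=X$ (available since $\Y$ is Polish). The resulting $\pi$ has $\X$-marginal $\mu$, and splitting $\int\rho\,d\pi$ over $A_n$ and $A_n^c$ yields $\E_\pi[\rho(X,Y)] \leqslant \epsilon\,\mu(A_n)+\E[\rho(\xi,\eta)\mathbf{1}_{A_n^c}(\xi)] \leqslant D$.

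The main obstacle is bounding $I(X;Y)$, since mutual information is neither convex nor concave in the joint law and $\pi$ has been constructed piecewise. My approach is to introduce the binary indicator $Z:=\mathbf{1}_{A_n}(X)$, which is a function of $X$, and to apply the chain rule twice. From $I(X,Z;Y)=I(X;Y)$ and $I(X,Z;Y)=I(Z;Y)+I(X;Y\mid Z)$ one obtains $I(X;Y)\leqslant \log 2 + \mu(A_n)\,I(X;Y\mid Z=1)+\mu(A_n^c)\,I(X;Y\mid Z=0)$. On $\{Z=1\}$ the output is supported on the finite set $B_n$, so $I(X;Y\mid Z=1)\leqslant \log|B_n|$. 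On $\{Z=0\}$ the conditional law of $(X,Y)$ coincides with that of $(\xi,\eta)$ conditioned on $\{\xi\in A_n^c\}$, and rerunning the same chain-rule identity on $(\xi,\eta)$ with indicator $\mathbf{1}_{A_n}(\xi)$ yields $\mu(A_n^c)\,I(X;Y\mid Z=0)\leqslant I(\xi;\eta)<\infty$. Combining, $I(X;Y)\leqslant \log 2+\log|B_n|+I(\xi;\eta)<\infty$, so $\pi$ is feasible at distortion $D$ and $R(D)<\infty$ for every $D>0$, i.e.\ $D_{\min}=0$.
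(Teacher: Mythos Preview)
The paper does not actually prove this theorem: it is quoted verbatim as Theorem~2.1 of Csisz\'ar~\cite{csiszar1974extremum} in the background Section~2.3 and no proof is supplied anywhere in the text or the appendices. So there is no ``paper's own proof'' to compare against.

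That said, your argument is correct and is essentially the classical construction. The two pieces---a nearest-neighbor quantizer $\phi_n$ into the finite codebook $B_n$ on $A_n$, and the conditional law of $\eta$ given $\xi$ on $A_n^c$---are glued exactly as one would expect, and both the distortion bound and the information bound go through. The only step that deserves a word of caution is the information bound on the tail: your inequality $\mu(A_n^c)\,I(X;Y\mid Z=0)\leqslant I(\xi;\eta)$ is obtained from $I(\xi;\eta)\geqslant I(\xi;\eta\mid W)\geqslant \mu(A_n^c)\,I(\xi;\eta\mid W=0)$ with $W=\mathbf{1}_{A_n}(\xi)$, together with the identification of the conditional law of $(X,Y)$ given $Z=0$ with that of $(\xi,\eta)$ given $W=0$. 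This is valid, but it implicitly uses that conditional mutual information is nonnegative and that the two conditional laws really coincide (which they do, since on $A_n^c$ you copied the kernel of $\eta$ given $\xi$). You might make that identification explicit in a final write-up. The dominated-convergence step for $\E[\rho(\xi,\eta)\mathbf{1}_{A_n^c}(\xi)]\to 0$ is also fine, since the $A_n$ are increasing with $\mu(A_n)\to 1$, so $\mathbf{1}_{A_n^c}\downarrow 0$ $\mu$-a.e.\ and $\rho(\xi,\eta)$ serves as the integrable majorant.
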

    The hypothesis of this theorem is trivially fulfilled if $\Y$ is at most countable. It is also fulfilled if $\X$ is Polish and $\rho^{(K)}(x,y)\coloneqq \min\{\rho(x,y),K\}$ is upper semicontinuous for some fixed $K>0$. In this case the $A_i$'s may be chosen as an increasing sequence of compact subsets of $\X$, since on a Polish space every probability distribution is tight\cite{csiszar1974extremum}.
   
    \begin{theorem}{}
        [Theorem 2.1 in \cite{csiszar1974extremum}]
        Under Assumption \ref{assumption: distortion function basic requirement} and \ref{assumption: existence of random variables}, if there exist $A_n$'s as in Theorem \ref{thm: R(D) finite} and $y_n\in \Y$ with 
        \begin{equation}
            \max_{x\in A_n} \rho(x,y_n)<\infty, n=1,2,\cdots
        \end{equation}
        then $R(D)\to 0$ as $D\to\infty$. Furthermore, setting
        \begin{equation}
            D_{\max}=\lim_{K\to\infty}\left(
                \inf_{y}\int \rho^{(K)}(x,y) \mu(dx)
            \right).
        \end{equation}
        Then $R(D)$ is positive for $D<D_{\max}$ and if $D_{\max}<\infty$ then $R(D)=0$ for $D>D_{\max}$.
    \end{theorem}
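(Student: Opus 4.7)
The plan is to establish all three assertions — $R(D)\to 0$ as $D\to\infty$, $R(D)>0$ for $D<D_{\max}$, and $R(D)=0$ for $D>D_{\max}<\infty$ — by constructing explicit test channels and invoking the mutual-information chain rule together with absolute-continuity properties of $\mu$-integrals. I denote by $(\xi_0,\eta_0)$ the pair furnished by Assumption \ref{assumption: existence of random variables}, so $\xi_0\sim\mu$, $I(\xi_0;\eta_0)<\infty$, and $\E\rho(\xi_0,\eta_0)<\infty$. Since $R$ is non-increasing, for each claim it suffices to exhibit channels with an appropriate combination of mutual information and expected distortion.

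For $R(D)\to 0$, I would define for each $n$ the channel $Y=y_n$ on $A_n$ and $Y\sim P_{\eta_0\mid\xi_0}(\cdot\mid X)$ on $A_n^c$. Writing $M_n:=\max_{x\in A_n}\rho(x,y_n)<\infty$, one obtains
\begin{equation*}
\E\rho(X,Y)\le M_n+\int_{A_n^c}\E[\rho(\xi_0,\eta_0)\mid\xi_0=x]\,d\mu(x)=M_n+o(1),
\end{equation*}
because the integrand lies in $L^1(\mu)$ with total mass $\E\rho(\xi_0,\eta_0)<\infty$, so the integral over $A_n^c$ vanishes by absolute continuity. The mutual-information chain rule through $\mathbf{1}_{A_n}$ yields
\begin{equation*}
I(X;Y)\le h(\mu(A_n^c))+\mu(A_n^c)\,I(\xi_0;\eta_0\mid \xi_0\in A_n^c),
\end{equation*}
whose right side tends to $0$ via the identity
\begin{equation*}
\mu(A_n^c)\,I(\xi_0;\eta_0\mid A_n^c)=\int_{A_n^c} H\bigl(P_{\eta_0\mid\xi_0=x}\,\|\,P_{\eta_0}\bigr)\,d\mu(x)-\mu(A_n^c)\,H\bigl(P_{\eta_0\mid A_n^c}\,\|\,P_{\eta_0}\bigr),
\end{equation*}
in which the first term is $o(1)$ by absolute continuity (its integrand belongs to $L^1(\mu)$ with total mass $I(\xi_0;\eta_0)$) and the second is bounded above by $I(\mathbf{1}_{A_n};\eta_0)\le h(\mu(A_n^c))\to 0$ via the chain rule. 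One thus obtains channels with $I\to 0$ and distortion at most $M_n+o(1)$; monotonicity of $R$ delivers $R(D)\to 0$.

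For $R(D)>0$ when $D<D_{\max}$, I argue by contradiction: if $R(D)=0$, then for every $\delta>0$ there is a coupling $\pi$ with $I\le\delta$ and $\E_\pi\rho\le D$, which implies $\E_\pi\rho^{(K)}\le D$ for every $K$. Applying Pinsker's inequality to $\pi$ versus $\mu\otimes\pi_Y$ and using $\|\rho^{(K)}\|_\infty\le K$ gives $|\E_\pi\rho^{(K)}-\E_{\mu\otimes\pi_Y}\rho^{(K)}|\le K\sqrt{2\delta}$, whence $\inf_y \int\rho^{(K)}(x,y)\,d\mu(x)\le D+K\sqrt{2\delta}$; letting $\delta\to 0$ and then $K\to\infty$ yields $D_{\max}\le D$, contradicting $D<D_{\max}$. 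For the converse $R(D)=0$ when $D>D_{\max}<\infty$, fix $\epsilon\in(0,D-D_{\max})$; for each sufficiently large $K$ pick $y_K$ with $\int\rho^{(K)}(x,y_K)\,d\mu\le D_{\max}+\epsilon/2$, and set $B_K:=\{x:\rho(x,y_K)>K\}$, for which Markov's inequality gives $\mu(B_K)\le(D_{\max}+\epsilon/2)/K\to 0$. Repeating the Part 1 construction with $A_n^c$ replaced by $B_K$ — $Y=y_K$ on $B_K^c$, $Y\sim P_{\eta_0\mid\xi_0}(\cdot\mid X)$ on $B_K$ — produces a channel with distortion at most $D_{\max}+\epsilon/2+o(1)<D$ and mutual information tending to $0$ by the same decomposition argument, so $R(D)=0$.

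The technical heart is the MI estimate $\mu(A_n^c)\,I(\xi_0;\eta_0\mid \xi_0\in A_n^c)\to 0$ (and its $B_K$ analogue): this is not automatic, since restricting to a shrinking set renormalizes both marginals and can in principle blow up the log-density. The decomposition displayed above is what makes it work — it separates an $L^1$-absolute-continuity contribution from a renormalization contribution bounded by the binary entropy of an indicator variable, and both vanish as $\mu(A_n^c)\to 0$. All remaining steps (Pinsker's inequality, Markov's inequality, monotonicity of $R$) are standard.
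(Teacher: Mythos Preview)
The paper does not supply its own proof of this statement: it is quoted as background from Csisz\'ar~\cite{csiszar1974extremum} in the recap of rate--distortion theory, and none of the appendices address it (those are devoted to Lemma~\ref{lemma: transform the rate-distortion problem}, Lemma~\ref{lemma: lower semicontinuity of C}, Theorem~\ref{thm: refined existence}, Proposition~\ref{proposition: the target measure has finite t-Wasserstein distance}, and Theorems~\ref{thm: parametric representation-1}--\ref{thm: parametric representation-2}). There is therefore nothing in this paper to compare your proposal against.

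Your argument is nonetheless correct and self-contained. The mixture channel (deterministic $y_n$ on $A_n$, the kernel of $(\xi_0,\eta_0)$ on $A_n^c$) is the right test object, and your displayed decomposition is precisely the identity $D_{KL}(P'\|\mu'\otimes\nu')=D_{KL}(P'\|\mu'\otimes P_{\eta_0})-D_{KL}(\nu'\|P_{\eta_0})$ applied to the restricted joint $P'$ with $Y$-marginal $\nu'=P_{\eta_0\mid A_n^c}$; since the subtracted term is nonnegative, you in fact only need the $L^1(\mu)$-absolute-continuity of $x\mapsto H(P_{\eta_0\mid\xi_0=x}\|P_{\eta_0})$ to conclude that the conditional mutual information vanishes --- the binary-entropy bound on the renormalization term is correct but superfluous. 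The Pinsker argument for $R(D)>0$ when $D<D_{\max}$ and the Markov-plus-mixture construction for $R(D)=0$ when $D>D_{\max}$ are both accurate. One cosmetic point for Part~1: it is cleaner to say that, given $\epsilon>0$, one fixes $n$ with $I_n<\epsilon$, obtaining a \emph{finite} distortion level $D_n$ with $R(D_n)<\epsilon$, whence monotonicity gives $R(D)<\epsilon$ for all $D\ge D_n$; your phrasing ``distortion at most $M_n+o(1)$; monotonicity of $R$ delivers $R(D)\to 0$'' is correct but compressed.
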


\begin{theorem}{}
    [Theorem 2.3 in \cite{csiszar1974extremum}]
  Under Assumption \ref{assumption: distortion function basic requirement}, \ref{assumption: existence of random variables} and \ref{assumption: existence of finite set}, for each $D>D_{\min}$ holds
  \begin{equation}
    R(D)=\max_{\beta,\alpha(x)}\left\{
        \int \log \alpha(x) \mu(dx) -\beta D
    \right\}
  \end{equation}
  where the maximization refers to $\alpha(x)\geqslant 1$ and $\beta\geqslant 0$ under the constraints
  \begin{equation}
    \int \alpha(x)e^{-\beta\rho(x,y)} \mu(dx)\leqslant 1, \forall y\in\Y.
  \end{equation}
  The maximizing $\beta$ values are exact those associated with $D$ by
  \begin{equation}
    R(D')+\beta D'\geqslant R(D)+\beta D, \forall D'.
  \end{equation}
\end{theorem}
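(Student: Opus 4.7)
The plan is to combine a direct Jensen-type weak duality argument with the weak optimal transport toolkit to achieve strong duality, thereby avoiding variational calculus on the primal side.

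\textbf{Weak duality.} Fix an admissible pair $(\beta,\alpha)$ (so $\beta\geqslant 0$, $\alpha\geqslant 1$, and $\int\alpha\,e^{-\beta\rho(\cdot,y)}d\mu\leqslant 1$ for every $y$) and any coupling $\pi$ with first marginal $\mu$, second marginal $\nu$, and $\E_{\pi}\rho\leqslant D$. Assuming $\pi\ll\mu\times\nu$ (else $I(\pi)=\infty$ and there is nothing to show) and writing $r:=d\pi/d(\mu\times\nu)$, one rearranges
\[
I(\pi)+\beta\E_{\pi}\rho-\int\log\alpha(x)\mu(dx)=\int\log\frac{r(x,y)}{\alpha(x)e^{-\beta\rho(x,y)}}d\pi(x,y).
\]
Jensen's inequality for the convex function $-\log$ (together with $d\pi/r=d(\mu\times\nu)$) bounds the right-hand side below by $-\log\int\alpha(x)e^{-\beta\rho(x,y)}d\mu(x)d\nu(y)\geqslant 0$ via the admissibility constraint and $\nu(\Y)=1$. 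Since $\beta\E_{\pi}\rho\leqslant\beta D$, this yields $I(\pi)\geqslant\int\log\alpha\,d\mu-\beta D$ and so $R(D)\geqslant\sup\{\int\log\alpha\,d\mu-\beta D\}$.

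\textbf{Achievability.} For $D>D_{\min}$, convexity of $R$ together with its finiteness in a neighborhood of $D$ provide $\beta^{\star}\geqslant 0$ with $R(D')+\beta^{\star}D'\geqslant R(D)+\beta^{\star}D$ for every $D'\geqslant 0$. Standard Lagrangian duality reduces the claim to showing $L(\beta^{\star})=R(D)+\beta^{\star}D$, where $L(\beta):=\inf\{I(\pi)+\beta\E_{\pi}\rho:\pi_{1}=\mu\}$. Using the identity $I(\pi)=\inf_{Q\in\pc(\Y)}H(\pi|\mu\times Q)$ and interchanging infima, I rewrite
\[
L(\beta^{\star})=\inf_{Q\in\pc(\Y)}\Bigl\{-\int\log Z(x,Q)\mu(dx)\Bigr\},\qquad Z(x,Q):=\int e^{-\beta^{\star}\rho(x,y)}Q(dy),
\]
whose inner minimum (for fixed $Q$) is attained by the Gibbs kernel $\pi_{x}^{Q}(dy)\propto e^{-\beta^{\star}\rho(x,y)}Q(dy)$; this is a direct application of Theorem \ref{thm:existence} to the weak transport cost $C(x,p)=H(p|Q)+\beta^{\star}\int\rho(x,y)p(dy)$, which evidently satisfies property $(A)$. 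Existence of an outer minimizer $Q^{\star}$ then follows from convexity and lower semicontinuity of $Q\mapsto-\int\log Z(x,Q)d\mu$ on $\pc(\Y)$, combined with the tightness furnished by Assumption \ref{assumption: existence of finite set}. Moreover, for the associated joint optimizer $\pi^{\star}$ with second marginal $\nu^{\star}$, Lemma \ref{lemma: explicit expression} applied to $\gamma:=Ke^{-\beta^{\star}\rho}d\mu\times d\nu^{\star}$ produces the multiplicative Schr\"odinger representation $d\pi^{\star}/d(\mu\times\nu^{\star})=Kf(x)g(y)e^{-\beta^{\star}\rho(x,y)}$, which, after normalization, identifies $Kf(x)=1/Z(x,Q^{\star})$.

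Set $\alpha^{\star}(x):=1/Z(x,Q^{\star})$; since $e^{-\beta^{\star}\rho}\leqslant 1$ and $Q^{\star}$ is a probability measure, $\alpha^{\star}\geqslant 1$. To verify $\int\alpha^{\star}(x)e^{-\beta^{\star}\rho(x,y)}\mu(dx)\leqslant 1$ for every $y\in\Y$, perturb $Q^{\star}$ along $Q_{\epsilon}:=(1-\epsilon)Q^{\star}+\epsilon\delta_{y}$ and differentiate $F(Q):=-\int\log Z(x,Q)d\mu$ at $\epsilon=0^{+}$:
\[
\left.\frac{d}{d\epsilon}\right|_{\epsilon=0^{+}}\!\!F(Q_{\epsilon})=1-\int\alpha^{\star}(x)e^{-\beta^{\star}\rho(x,y)}\mu(dx)\geqslant 0
\]
by optimality of $Q^{\star}$, which is the required dual constraint. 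Finally $\int\log\alpha^{\star}d\mu-\beta^{\star}D=L(\beta^{\star})-\beta^{\star}D=R(D)$, so $(\beta^{\star},\alpha^{\star})$ attains the supremum, and the sharpness statement on the maximizing $\beta$ is precisely the subgradient condition used to select $\beta^{\star}$.

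\textbf{Main obstacles.} Two technical points require care: (i) establishing the lower semicontinuity and compactness needed to apply Theorem \ref{thm:existence} to the outer minimization over $Q$ in the generality of Polish $\X,\Y$ and unbounded $\rho$, which I would handle by combining $\rho\geqslant 0$, the tightness from Assumption \ref{assumption: existence of finite set}, and Fatou's lemma; and (ii) the interchange of differentiation with integration in the perturbation argument, for which the uniform lower bound $Z(x,Q_{\epsilon})\geqslant(1-\epsilon)Z(x,Q^{\star})$ supplies an integrable dominator enabling dominated convergence.
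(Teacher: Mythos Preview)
First, note that the paper does not prove this statement: it is quoted from Csisz\'ar (1974) as background, without proof. The paper's own results (Theorems~\ref{thm: refined existence}--\ref{thm: parametric representation-2}) give a \emph{different} parametric representation, and the closest analogue of your achievability step, Theorem~\ref{thm: parametric representation-2}, explicitly \emph{assumes} the existence of an optimal reconstruction $\nu^{\star}$ rather than deriving it.

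Your weak-duality argument is correct. The achievability half has a real gap at the line ``existence of an outer minimizer $Q^{\star}$ then follows from \ldots\ the tightness furnished by Assumption~\ref{assumption: existence of finite set}.'' That assumption supplies only a finite $B\subset\Y$ with $\int\rho(x,B)\,d\mu<\infty$; it places no moment or support restriction on candidate $Q\in\pc(\Y)$ and hence yields no tightness for the outer infimum. Existence of an optimal output law under only Assumptions~\ref{assumption: distortion function basic requirement}--\ref{assumption: existence of finite set} is in fact the central technical obstruction in this subject: it is exactly why the present paper adds $\nu^{\star}$ as a hypothesis in Theorem~\ref{thm: parametric representation-2}, and why \cite{rezaei2006rate} resorts to Stone--\v{C}ech compactification. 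Since your perturbation $Q_{\epsilon}=(1-\epsilon)Q^{\star}+\epsilon\delta_{y}$ needs a genuine minimizer to produce a first-order inequality, the argument does not close without it. Your ``Main obstacle (i)'' proposes to apply Theorem~\ref{thm:existence} to the outer problem, but that theorem concerns $\inf_{\pi\in\Pi(\mu,\nu)}$ with \emph{both} marginals fixed; it says nothing about minimizing over $\nu$ (equivalently $Q$), so the appeal is misplaced.

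A minor aside: the inner Gibbs step does not need Theorem~\ref{thm:existence} either---with only the first marginal constrained, the problem decouples in $x$ and is solved pointwise by the Gibbs variational formula.
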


\section{OWT Meets Rate-distortion Theory}

In this section, we will revisit rate-distortion theory using the framework of weak transport theory. We begin by transforming the original RD problem, which is focused on optimizing the disintegration, into an OWT problem of optimizing the joint distribution under the condition of a given target measure (or reconstruction). This is followed by seeking the infimum over the target measure. We will utilize the existence theorem for weak transport problems, along with its application in the Schr\"odinger problem, to provide a new parametric representation of the RD function. Furthermore, under the condition of the existence of an optimal reconstruction, we will prove the necessary optimality conditions for the RD function, thereby offering a concise proof of the SLB achievability as proposed by K. Rose\cite{rose1994mapping}. Our main contributions are summarized from Theorem \ref{thm: refined existence} to Theorem \ref{thm: parametric representation-2}. We provide detailed proofs of them in Appendix.

\begin{lemma}\label{lemma: transform the rate-distortion problem}
    Under Assumption \ref{assumption: distortion function basic requirement} and \ref{assumption: existence of random variables}, the rate-distortion function $R(D)$ can be represented as
    \begin{equation*}
        \begin{aligned}
            \inf_{\pi_x:\E_\pi \rho \leqslant D} I(X;Y)&=\inf_{\nu\in\pc(\Y)}\inf_{\pi\in\Pi(\mu,\nu): \E_\pi \rho \leqslant D}I(X;Y)\\ 
            &=\inf_{(\nu,\pi)\in \pc(\Y)\times \Pi(\mu,\nu):\E_\pi\rho\leqslant D}I(X;Y).
        \end{aligned}
    \end{equation*}
\end{lemma}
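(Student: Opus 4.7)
The plan is to exhibit a bijection between the objects parameterizing the three infima and observe that the mutual-information objective depends only on the joint law $\pi$, so that the three optimization problems have the same value. Concretely, any disintegration $(\pi_x)_{x\in\X}$ (i.e.\ Markov kernel from $\X$ to $\Y$) defines a joint measure $\pi=\mu\otimes\pi_x\in\pc(\X\times\Y)$ whose first marginal is $\mu$ and whose second marginal $\nu$ is the pushforward $\int\pi_x\,\mu(dx)$. Conversely, since $\X,\Y$ are Polish, the disintegration theorem provides, for any $\pi\in\Pi(\mu,\nu)$, an essentially unique regular family $(\pi_x)_{x\in\X}$ with $\pi=\mu\otimes\pi_x$. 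Under this correspondence, the distortion constraint $\E_\pi\rho\leqslant D$ and the value $I(X;Y)=H(\pi\,|\,\mu\otimes\nu)$ are preserved.

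First, I would argue the equality
$$\inf_{\pi_x:\,\E_\pi\rho\leqslant D} I(X;Y)=\inf_{\nu\in\pc(\Y)}\ \inf_{\pi\in\Pi(\mu,\nu):\,\E_\pi\rho\leqslant D} I(X;Y).$$
The ``$\geqslant$'' direction is immediate: every admissible $\pi_x$ on the left produces an admissible pair $(\nu,\pi)$ on the right (with $\nu$ being the second marginal of $\mu\otimes\pi_x$) realizing the same mutual information. For ``$\leqslant$'', given $(\nu,\pi)$ on the right, the disintegration $(\pi_x)$ of $\pi$ with respect to $\mu$ is an admissible kernel on the left with the same joint distribution, hence the same $I(X;Y)$ and the same distortion.

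Second, the passage to the joint infimum,
$$\inf_{\nu\in\pc(\Y)}\ \inf_{\pi\in\Pi(\mu,\nu):\,\E_\pi\rho\leqslant D}I(X;Y)=\inf_{(\nu,\pi)\in\pc(\Y)\times\Pi(\mu,\nu):\,\E_\pi\rho\leqslant D}I(X;Y),$$
is a purely formal identity: the feasible set on the right is the disjoint union over $\nu\in\pc(\Y)$ of the feasible sets on the left, and interchanging an iterated infimum with a joint infimum over the union is always valid. Assumption~\ref{assumption: existence of random variables} ensures the common value is finite, so no $\pm\infty$ issue arises.

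The only non-routine step is the appeal to the disintegration theorem and the preservation of the distortion constraint and mutual information under the bijection $\pi\leftrightarrow(\nu,\pi_x)$, but both facts are standard on Polish spaces (regular conditional probabilities exist, and $I(X;Y)$ is a functional of the joint law alone). I do not anticipate a substantive obstacle; the lemma is essentially a rewriting of the RD optimization in a form compatible with the weak transport framework used in the sequel.
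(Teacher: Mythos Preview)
Your proposal is correct and follows essentially the same approach as the paper: both rely on the disintegration theorem to pass between Markov kernels $(\pi_x)$ and couplings $\pi\in\Pi(\mu,\nu)$, verifying the two inequalities that yield the first equality, and both treat the iterated-versus-joint infimum as a routine identity (the paper spells it out via a minimizing sequence, you invoke it as formal). The only cosmetic difference is the order in which the two equalities are established.
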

\begin{proof}
See Appendix \ref{appendix: proof of lemma: transform the rate-distortion problem}.
\end{proof}
\begin{remark}
    Note that $I(X;Y)=\int_{\X\times\Y}\log\dfrac{d\pi}{d\mu d\nu} d\pi=\int_{\X}d\mu(x)\int_{\Y}\log\dfrac{d\pi_x}{d\nu}d\pi_x$, and $\pi\ll\mu\times\nu\Longleftrightarrow \pi_x\ll\nu$. Hence we always assume that $\pi_x\ll \nu$ so that the infimum is proper.
\end{remark}

\noindent Consider the (weak) dual problem of rate-distortion problem
\begin{equation}
    \begin{aligned}
    R(D)&=\inf_{\nu\in\pc(\Y)}\inf_{\pi\in\Pi(\mu,\nu): \E_\pi \rho \leqslant D}I(X;Y)\\
    &\geqslant \inf_{\nu\in\pc(\Y)}\sup_{\beta\geqslant 0}\inf_{\pi\in\Pi(\mu,\nu)}I(X;Y)+\beta (\E_\pi \rho -D).
    \end{aligned}
\end{equation}
Define 
\begin{equation}
    \begin{aligned}
        J(\nu,\beta)&\coloneqq \inf_{\pi\in\Pi(\mu,\nu)} I(X;Y)+\beta (\E_\pi\rho -D)\\ 
        &=\int_{\X}d\mu(x) \int_{\Y} \left[ 
            \log\dfrac{d\pi_x}{d\nu} +\beta\rho(x,y)-\beta D
        \right] d\pi_x\\ 
        d\gamma(x,y)&\coloneqq K e^{-\beta \rho(x,y)} d\mu(x)d\nu(y)
    \end{aligned}
\end{equation}
where $K=\left( 
    \int_{\X\times\Y} e^{-\beta\rho(x,y)} d\mu d\nu
\right)^{-1}$ such that $\gamma\ll \mu\times\nu$ is a probability measure on $\X\times \Y$   (note that $\gamma$ is indeed equivalent to $\mu\times\nu$).
We have
\begin{equation*}
	\begin{aligned}
        &I(X;Y)+\beta (\E_\pi\rho -D)\\
        &\;=\int_{\X} D_{KL}(\pi_x\|\gamma_x)d\mu+\int_{\X}\log K_x d\mu(x)-\beta D\\
        &\;=\int_{\X\times\Y}\log\dfrac{d\pi_x}{d\gamma_x}d\pi-\int_{\X}\log\left(\int_{\Y}e^{-\beta\rho(x,y)}d\nu\right) d\mu(x)-\beta D\\
        &\;=\int_{\X\times\Y}\log\dfrac{d\pi}{d\gamma}d\pi-\int_{\X}\log\left(\int_{\Y}e^{-\beta\rho(x,y)}d\nu\right) d\mu(x)\\ 
        &\qquad -\beta D+\int_{\X}\log\dfrac{d\gamma_0}{d\mu}d\mu
	\end{aligned}
\end{equation*}
where $\gamma_0$ is the $X$-marginal of $\gamma$ and 
\begin{equation*}
	\begin{aligned}
       d\gamma_x&=K_x e^{-\beta \rho(x,y)}d\nu, K_x=\left(\int_{\Y}e^{-\beta\rho(x,y)}d\nu\right)^{-1}.
	\end{aligned}
\end{equation*}
Hence we only need to consider the infimum of $D_{KL}(\pi\| \gamma)$ over $\pi\in\Pi(\mu,\nu)$.

\begin{remark}
   
    One might be concerned about the finiteness of $K,K_x$ and integrability of $\log K_x$, while they always hold indeed. It is noticeable that $K=+\infty$ if and only if $\rho(x,y)=+\infty, \mu\times \nu$-a.e., implying that $\rho=+\infty, \pi$-a.e. for each $\pi\in\Pi(\mu,\nu)$ with $\pi_x\ll \nu$. In this case, $\E_\pi \rho = +\infty$ and $J(\nu,\beta)=+\infty$. Hence we only need to consider the case that $K<+\infty$. Similarly, $K_x=+\infty$ if and only if $\rho(x,y)=+\infty, \nu$-a.e. for given $x\in\X$, implying that $\rho=+\infty, \pi_x$-a.e. with $\pi_x\ll \nu$. Then we have $K_x<+\infty, \mu$-a.e. otherwise $J(\nu,\beta)$ becomes infinite. Hence the case of $K_x=+\infty$ can be ignored. Moreover, if $\int_{\X}\log K_x d\mu(x)=+\infty$, we also have $J(\nu,\beta)=+\infty$, which can be excluded.
\end{remark}
We consider the existence of the minimizer of $J(\nu,\beta)$. Define
\begin{equation}
    \begin{aligned}
    C(x,p)&\coloneqq \int_{\Y} \left[ 
        \log\dfrac{dp}{d\nu}+\beta\rho(x,y)-\beta D
    \right] dp, p\in\pc(\Y).
    \end{aligned}
\end{equation}
\begin{assumption}\label{assumption: lower semicontinuity of C}
    $\int_{\Y} \rho(x,y) dp$ is jointly lower semicontinuous in $(x,p)\in\X\times\pc(\Y)$.
\end{assumption}
This assumption can hold if $\X,\Y$ are Polish spaces and $\rho\in C_b(\X\times \Y)$. It can be checked that $(x,p)\to \int_{\Y} \rho(x,y) dp$ is indeed bounded and jointly continuous.  In addition, if $\rho(x,y)$ is jointly lower semicontinuous with an upper bound (the lower bound is $0$), then it can be increasingly converged by a sequence of bounded $k$-Lipschitz functions, which implies that $C(x,p)$ is jointly lower semicontinuous with respect to the product topology on $\X\times\pc(\Y)$. One way to verify these assertions is to use \textit{Theorem A.3.12} in \cite{dupuis2011weak}.

\begin{lemma}\label{lemma: lower semicontinuity of C}
    Under Assumption \ref{assumption: lower semicontinuity of C}, the cost function $C(x,p)$ is jointly lower semicontinuous, bounded from below and convex in the second argument.
\end{lemma}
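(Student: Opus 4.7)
The plan is to decompose $C(x,p)$ into two well-understood summands and to verify the three claims piece by piece. Write
$$C(x,p) = H(p|\nu) + \beta\int_{\Y}\rho(x,y)\,dp(y) - \beta D,$$
where $H(p|\nu) = \int_{\Y}\log(dp/d\nu)\,dp$ when $p \ll \nu$ (and $+\infty$ otherwise), the middle summand is linear in $p$ for each fixed $x$, and the last summand is a constant.

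Convexity and the lower bound are immediate from standard facts. The relative entropy $p\mapsto H(p|\nu)$ is convex, which follows from the convexity of $t\mapsto t\log t$ on $[0,\infty)$; the linear term $p\mapsto \int \rho(x,y)\,dp$ is trivially convex; and adding the constant $-\beta D$ preserves convexity. For boundedness, Gibbs' inequality gives $H(p|\nu)\geqslant 0$, Assumption \ref{assumption: distortion function basic requirement} gives $\rho\geqslant 0$, so that $\int\rho\,dp\geqslant 0$ for every $p$, and $\beta\geqslant 0$; hence $C(x,p)\geqslant -\beta D$.

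The main content is joint lower semicontinuity on $\X\times\pc(\Y)$ equipped with the product of the topology on $\X$ and the weak topology on $\pc(\Y)$. The distortion piece $(x,p)\mapsto \int\rho(x,y)\,dp$ is jointly lsc by Assumption \ref{assumption: lower semicontinuity of C}. For the entropy piece, I would invoke the Donsker--Varadhan variational identity
$$H(p|\nu) = \sup_{f\in C_b(\Y)}\left\{\int_{\Y} f\,dp - \log\int_{\Y} e^{f}\,d\nu\right\},$$
which exhibits $H(\cdot|\nu)$ as a supremum of weakly continuous affine functionals of $p$, and hence as weakly lsc on $\pc(\Y)$. Since $H(p|\nu)$ does not depend on $x$, it is trivially jointly lsc on $\X\times\pc(\Y)$. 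Summing two jointly lsc functions and a constant preserves joint lower semicontinuity, so $C$ is jointly lsc. Because the $t$-Wasserstein topology on $\pc_t(\Y)$ is finer than the weak topology, this conclusion transfers at no cost to $\pc_t(\Y)$, which is the setting used in property $(A)$.

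The main technical subtlety is giving meaning to the decomposition uniformly on the set $\{p:p\not\ll\nu\}$, where the integral definition of $H(p|\nu)$ requires a convention. Using the Donsker--Varadhan identity as the \emph{definition} of $H(p|\nu)$ on all of $\pc(\Y)$ handles this automatically: the supremum equals $+\infty$ exactly when $p\not\ll\nu$, so no case analysis is required and lower semicontinuity holds uniformly.
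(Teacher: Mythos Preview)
Your proposal is correct and follows essentially the same three-part structure as the paper's proof: split $C$ into the relative-entropy piece and the distortion piece, then verify lower semicontinuity, lower boundedness, and convexity separately. The only difference is cosmetic: for the lower semicontinuity of $p\mapsto H(p|\nu)$ the paper cites Posner \cite{posner1975random} (or Polyanskiy--Wu) directly, whereas you derive it from the Donsker--Varadhan variational formula; both routes yield the same conclusion, and your handling of the case $p\not\ll\nu$ via Donsker--Varadhan is arguably cleaner than the paper's implicit convention.
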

\begin{proof}
See Appendix \ref{appendix: proof of lemma: lower semicontinuity of C}.
\end{proof}
\begin{corollary}
    Under the Assumption \ref{assumption: lower semicontinuity of C}, $J(\nu,\beta)$ admits a minimizer, that is, there exists $\pi^\star\in \Pi(\mu,\nu)$ such that 
    \begin{equation*}
        \begin{aligned}
        J&(\nu,\beta)=\int_{\X}d\mu(x) \int_{\Y} \left[ 
            \log\dfrac{d\pi^\star_x}{d\nu} +\beta\rho(x,y)-\beta D
        \right] d\pi^\star_x\\ 
        &=\int_{\X\times\Y}\log\dfrac{d\pi^\star}{d\gamma}d\pi^\star-\int_{\X}\log\left(\int_{\Y}e^{-\beta\rho(x,y)}d\nu\right) d\mu(x)\\
        &\qquad -\beta D+\int_{\X}\log\dfrac{d\gamma_0}{d\mu}d\mu.
        \end{aligned}
    \end{equation*}
\end{corollary}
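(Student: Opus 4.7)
The plan is to apply Theorem \ref{thm:existence-1} directly with the cost functional $C(x,p)$ defined immediately above the corollary. Note that $J(\nu,\beta)$ is precisely $\int_{\X} C(x,\pi_x)\,d\mu(x)$ minimized over $\pi \in \Pi(\mu,\nu)$. By the preceding Lemma \ref{lemma: lower semicontinuity of C}, the functional $C$ is jointly lower semicontinuous on $\X \times \pc(\Y)$, bounded from below, and convex in its second argument. These are exactly the three hypotheses of Theorem \ref{thm:existence-1}, so the existence of a minimizer $\pi^\star \in \Pi(\mu,\nu)$ follows at once. Before appealing to the theorem, I would briefly note that the lower bound on $C$ is essentially explicit: the entropic term $\int_\Y \log(dp/d\nu)\,dp \geqslant 0$ by Gibbs' inequality (with the usual convention $C(x,p) = +\infty$ when $p \not\ll \nu$), and $\beta\rho \geqslant 0$, so $C(x,p) \geqslant -\beta D$.

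Once $\pi^\star$ is produced, the second displayed equality in the corollary is obtained by substituting $\pi = \pi^\star$ into the decomposition of $I(X;Y) + \beta(\E_\pi \rho - D)$ already derived in the text just preceding the corollary. That identity,
\begin{equation*}
I(X;Y)+\beta(\E_\pi\rho-D) = \int_{\X\times\Y}\log\frac{d\pi}{d\gamma}d\pi -\int_{\X}\log\left(\int_{\Y}e^{-\beta\rho(x,y)}d\nu\right)d\mu-\beta D+\int_{\X}\log\frac{d\gamma_0}{d\mu}d\mu,
\end{equation*}
is a pointwise algebraic rewriting valid for any $\pi \in \Pi(\mu,\nu)$ with $\pi \ll \mu \times \nu$, so it applies verbatim at $\pi^\star$ and produces the claimed expression for $J(\nu,\beta)$.

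Finiteness of the infimum (needed only to ensure the statement is non-vacuous) is guaranteed by Assumption \ref{assumption: existence of random variables}, which furnishes a coupling with finite mutual information and finite expected distortion; any such coupling yields $J(\nu,\beta) < +\infty$ for appropriate $\nu$, while couplings failing $\pi \ll \mu \times \nu$ incur $C(x,\pi_x) = +\infty$ and are automatically excluded from minimization.

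The main potential obstacle, namely verifying the joint lower semicontinuity of $C$, has already been absorbed into Lemma \ref{lemma: lower semicontinuity of C}. What remains in the proof of the corollary is essentially bookkeeping: invoking the existence theorem and substituting into the identity derived earlier. I do not anticipate needing the refined Theorem \ref{thm:existence} (involving property (A) and the Wasserstein topology on $\pc_t(\Y)$), because Assumption \ref{assumption: lower semicontinuity of C} grants ordinary lower semicontinuity with respect to the weak topology on $\pc(\Y)$, which suffices for Theorem \ref{thm:existence-1}.
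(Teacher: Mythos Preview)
Your proposal is correct and matches the paper's intent: the corollary is stated without a separate proof precisely because it follows immediately from Lemma~\ref{lemma: lower semicontinuity of C} (which verifies the three hypotheses) together with Theorem~\ref{thm:existence-1}, with the displayed equality coming from the algebraic decomposition already established in the text. Your remark about finiteness is slightly tangential---$\nu$ is fixed here and Theorem~\ref{thm:existence-1} delivers a minimizer regardless of whether the infimum is finite---but the core argument is exactly right.
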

However, Assumption \ref{assumption: lower semicontinuity of C} may not hold in many cases, such as $\rho(x,y)=\norm{x-y}^2$ for $\X=\Y=\R^n$, which has been mentioned below Theorem \ref{thm:existence-1}. Hence we need to utilize the Refined Existence heorem \ref{thm:existence} to carry out applications in general cases. Before this, we need to impose some addtional conditions on $\rho(x,y)$ and the source $\mu$.
\begin{assumption}\label{assumption: refined lower semicontinuity of C}
$\int_{\Y}\rho(x,y)dp$ is jointly lower semicontinuous in $(x,p)\in\X\times\pc_t(\Y)$ for given metric $d_{\Y}$ on space $\Y$.
\end{assumption}

This assumption is weaker than Assumption \ref{assumption: lower semicontinuity of C}, as convergence (or topology) in the space $\pc_t(\Y)$ is much stronger than in $\pc(\Y)$. This extension broadens the applicable scenarios, such as the case where the cost function takes the form $\rho(x,y) = \|x - y\|^t$ with \( t \geqslant 1 \) and \(\X = \Y = \mathbb{R}^n\) under the Euclidean metric. We can also demonstrate that the cost function $C(x,p)$ satisfies property $(A)$.
\begin{theorem}\label{thm: refined existence}
    Under Assumption \ref{assumption: refined lower semicontinuity of C}, the cost function $C(x,p)$ satisfies property $(A)$. Assume that $\nu\in\pc_t(\Y)$. Then $J(\nu,\beta)$ admits a minimizer $\pi^{\star}$.
\end{theorem}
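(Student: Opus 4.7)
My plan is to rewrite $C(x,p)=H(p|\nu)+\beta\int_\Y \rho(x,y)\,dp(y)-\beta D$ (with the convention $H(p|\nu)=+\infty$ whenever $p\not\ll\nu$), verify the three conditions of property $(A)$ for this decomposed cost functional one by one, and then invoke Theorem~\ref{thm:existence} directly to extract a minimizer.

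The three conditions I would check are as follows. For boundedness from below, Gibbs's inequality gives $H(p|\nu)\geqslant 0$, and $\rho\geqslant 0$ by Assumption~\ref{assumption: distortion function basic requirement}, so $C(x,p)\geqslant -\beta D$. For convexity in the second argument, the standard convexity of $p\mapsto H(p|\nu)$ on $\pc(\Y)$ combines with the linearity of $p\mapsto \int \rho\,dp$ to give convexity of $C(x,\cdot)$. For joint lower semicontinuity on $\X\times\pc_t(\Y)$, I would argue that $p\mapsto H(p|\nu)$ is lower semicontinuous on $\pc(\Y)$ equipped with the weak topology (a classical fact, obtainable e.g.\ through the Donsker--Varadhan variational representation); since the $t$-Wasserstein topology on $\pc_t(\Y)$ is strictly finer than the weak topology, lower semicontinuity is preserved on $\pc_t(\Y)$, and since this map is independent of $x$, it is trivially jointly lower semicontinuous on $\X\times\pc_t(\Y)$. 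Joint lower semicontinuity of $(x,p)\mapsto \int \rho\,dp$ on $\X\times\pc_t(\Y)$ is precisely Assumption~\ref{assumption: refined lower semicontinuity of C}, so the sum $C(x,p)$ is jointly lower semicontinuous.

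With property $(A)$ verified and the hypotheses $\mu\in\pc(\X)$, $\nu\in\pc_t(\Y)$ in force, Theorem~\ref{thm:existence} will guarantee that $\inf_{\pi\in\Pi(\mu,\nu)}\int_\X C(x,\pi_x)\,d\mu(x)=J(\nu,\beta)$ is attained by some $\pi^\star\in\Pi(\mu,\nu)$. I expect the main subtlety to be the lower semicontinuity step: one has to confirm that relative entropy remains lower semicontinuous under the passage from the weak topology on $\pc(\Y)$ to the $t$-Wasserstein topology on $\pc_t(\Y)$. This is resolved by the elementary observation that a lower semicontinuous function with respect to a coarser topology is also lower semicontinuous with respect to any finer topology, since refining the topology only reduces the set of convergent sequences. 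The convexity and lower-bound conditions are routine once $C$ is decomposed as above, and the final invocation of Theorem~\ref{thm:existence} is immediate.
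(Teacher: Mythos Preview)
Your proposal is correct and follows essentially the same approach as the paper: decompose $C$ into relative entropy plus expected distortion, verify the three conditions of property~$(A)$ separately (non-negativity of relative entropy and $\rho$ for the lower bound, convexity of $H(\cdot|\nu)$ plus linearity for convexity, weak lower semicontinuity of relative entropy together with Assumption~\ref{assumption: refined lower semicontinuity of C} for joint lower semicontinuity), and then invoke Theorem~\ref{thm:existence}. The paper cites Posner (1975) or Polyanskiy--Wu for the lower semicontinuity of relative entropy rather than the Donsker--Varadhan representation, and is less explicit than you are about the passage from the weak to the $t$-Wasserstein topology, but the structure and content of the two arguments are the same.
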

\begin{proof}
See Appendix \ref{appendix: proof of refined existence}.
\end{proof}
In order to restrict the reconstruction to the space $\pc_t(\Y)$, we introduce the following assumption, which allows us to apply the Refined Existence Theorem \ref{thm:existence} of OWT to guarantee the existence of $\pi^{\star}$. Although this assumption is stronger than those in the works of I. Csisz\'ar\cite{csiszar1974extremum} and F. Rezaei\cite{rezaei2006rate} et al., it remains reasonable and widely applicable in practical scenarios.
\begin{assumption}\label{assumption: retrict the target measure}
    $\X=\Y$ are Polish spaces and the source $\mu\in\pc_t(\X)$. The loss function $\rho(x,y)$ satisfies that 
    $$\rho(x,y)\geqslant c \cdot d_{\X}(x,y)^t\;\;\text{for some constant $c>0$.}$$
\end{assumption}
\begin{proposition}\label{proposition: the target measure has finite t-Wasserstein distance}
The reconstruction distribution $\nu\in\pc_t(\Y)$ for any $D>D_{\min}$ under Assumption \ref{assumption: distortion function basic requirement}, \ref{assumption: existence of random variables} and \ref{assumption: retrict the target measure}.
\end{proposition}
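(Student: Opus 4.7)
The plan is to show that any joint coupling $\pi$ feasible for the RD problem at distortion level $D > D_{\min}$ has a $Y$-marginal $\nu$ with finite $t$-th moment, hence $\nu \in \pc_t(\Y)$. Since $D > D_{\min}$ gives $R(D) < \infty$, the feasibility set $\{\pi \in \Pi(\mu,\nu) : \E_\pi \rho \leqslant D\}$ is nonempty (and every element of it has finite expected distortion). So it suffices to bound the $t$-th moment of the $Y$-marginal of an arbitrary such $\pi$.

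First, I would use Assumption \ref{assumption: retrict the target measure}, namely $\rho(x,y) \geqslant c \, d_\X(x,y)^t$ with $\X = \Y$, to translate the distortion constraint into a transport-type moment bound:
\begin{equation*}
\int_{\X \times \Y} d_\X(x,y)^t \, d\pi(x,y) \;\leqslant\; \frac{1}{c}\int_{\X\times\Y} \rho(x,y)\, d\pi(x,y) \;\leqslant\; \frac{D}{c} \;<\;\infty.
\end{equation*}
Next, I would fix an arbitrary reference point $y_0 \in \X = \Y$ and apply Minkowski's inequality (which is valid because $t \geqslant 1$) in $L^t(\pi)$:
\begin{equation*}
\Bigl(\int_{\Y} d_\Y(y,y_0)^t\, d\nu(y)\Bigr)^{1/t} \;=\; \bigl\|d_\Y(Y,y_0)\bigr\|_{L^t(\pi)} \;\leqslant\; \bigl\|d_\Y(X,Y)\bigr\|_{L^t(\pi)} + \bigl\|d_\X(X,y_0)\bigr\|_{L^t(\pi)}.
\end{equation*}
The first term is bounded by $(D/c)^{1/t}$ by the previous display, and the second term equals $\bigl(\int d_\X(x,y_0)^t d\mu(x)\bigr)^{1/t}$, which is finite since $\mu \in \pc_t(\X)$ by Assumption \ref{assumption: retrict the target measure}. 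Combining these gives $\int d_\Y(y,y_0)^t d\nu < \infty$, so $\nu \in \pc_t(\Y)$.

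There is essentially no obstacle beyond taking care that $t \geqslant 1$ (ensuring Minkowski applies) and that a feasible $\pi$ exists (ensured by $D > D_{\min}$ and Assumption \ref{assumption: existence of random variables}). The key conceptual point, which is worth flagging, is that the lower bound $\rho \geqslant c\, d_\X^t$ together with $\mu \in \pc_t(\X)$ are the minimal ingredients allowing us to promote convergence of measures from the weak topology on $\pc(\Y)$ to the stronger $t$-Wasserstein topology on $\pc_t(\Y)$; this is exactly what Theorem \ref{thm: refined existence} needs in order to apply the refined existence result of Theorem \ref{thm:existence}.
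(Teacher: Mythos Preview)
Your proposal is correct and takes essentially the same approach as the paper. The paper phrases the key step via the Wasserstein triangle inequality $W_t(\nu,\delta_{y_0}) \leqslant W_t(\mu,\delta_{y_0}) + W_t(\mu,\nu)$ and then bounds $W_t(\mu,\nu)\leqslant c^{-1/t}D^{1/t}$, whereas you work directly with a fixed feasible coupling and apply Minkowski in $L^t(\pi)$; these are the same argument in slightly different packaging.
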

\begin{proof}
See Appendix \ref{appendix: proof of proposition 1}.
\end{proof}
\noindent By Proposition \ref{proposition: the target measure has finite t-Wasserstein distance} and Lemma \ref{lemma: transform the rate-distortion problem}, we can obtain that
\begin{equation}
    \begin{aligned}
        R(D)&=\inf_{\pi_x:\E_\pi \rho \leqslant D}I(X;Y)\\ 
        &=\inf_{\nu\in\pc_t(\Y)}\inf_{\pi\in\Pi(\mu,\nu): \E_\pi \rho \leqslant D}I(X;Y)\\ 
        &\geqslant  \inf_{\nu\in\pc_t(\Y)}\sup_{\beta\geqslant 0}\inf_{\pi\in\Pi(\mu,\nu)}I(X;Y)+\beta (\E_\pi \rho -D)\\ 
        &=\inf_{\nu\in\pc_t(\Y)}\sup_{\beta\geqslant 0} J(\nu,\beta).
    \end{aligned}
\end{equation}
In fact, by applying the existence theorem of OWT and relevant results from the Schr\"odinger problem\cite{leonard2013survey,nutz2021introduction}, we can directly obtain the expression for $J(\nu, \beta)$, which in turn provides a new parametric representation for $R(D)$.
 By Lemma \ref{lemma: explicit expression}, we can obtain the parametric representation of $R(D)$ and a series of equivalent forms of the RD function.
\begin{theorem}\label{thm: parametric representation-1}
Under Assumption \ref{assumption: distortion function basic requirement}$-$\ref{assumption: retrict the target measure}, the RD function $R(D)$ can be represented as

\begin{equation*}
	\begin{aligned}
&R(D)=\inf\limits_{\pi_x:\mathbb{E}_{\pi}\rho\leqslant D}I(X;Y)\\
&=\inf\limits_{\nu\in\pc_t(\Y)}\sup_{\beta\geqslant 0}\inf\limits_{\pi\in\Pi(\mu,\nu)}I(X;Y)+\beta(\mathbb{E}_{\pi}\rho-D)\\
&=\inf\limits_{\nu}\max_{\beta\in \partial R(D)}
\left\{
	\!-\!\int_{}\log\left(\int_{}e^{-\beta\rho(x,y)}d\nu\right)d\mu\!-\!\beta D\!+\!L(\nu,\beta)
\right\}\\ 
&=
\max_{\beta\in \partial R(D)}\inf\limits_{\nu}\!
\left\{
	\!-\!\int_{}\log\left(\int_{}e^{-\beta\rho(x,y)}d\nu\right)d\mu\!-\!\beta D\!+\!L(\nu,\beta)
\right\}\\ 
&=\!\inf\limits_{\nu}\!
\left\{\!
	-\!\int\!\log\!\left(\!\int \!e^{\!-\beta\rho(x,y)}d\nu\!\right)\!d\mu\!-\!\beta D\!+\!L(\nu,\beta)\!
\right\}\!,\forall \beta\!\in\!\partial R(D)
	\end{aligned}
  \end{equation*}
where 
$$L(\nu,\beta)=\int_{\X}\log\left(\dfrac{\int_{\Y}e^{-\beta\rho(x,y)}d\nu}{\int_{\Y}g(y)e^{-\beta\rho(x,y)}d\nu}\right)d\mu +\int_{\Y}\log g(y)d\nu.$$
with $g(y)$ satisfying the Schr\"odinger equations \eqref{Schr\"odinger equations}
\end{theorem}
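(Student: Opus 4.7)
The plan is to combine Lemma \ref{lemma: transform the rate-distortion problem} and Proposition \ref{proposition: the target measure has finite t-Wasserstein distance} to reduce the RD problem to a joint infimum over $(\nu,\pi)\in\pc_t(\Y)\times\Pi(\mu,\nu)$, then evaluate the inner Lagrangian in closed form using the existence of a Schr\"odinger-bridge minimizer (Theorem \ref{thm: refined existence} plus Lemma \ref{lemma: explicit expression}), and finally upgrade the resulting weak Lagrangian duality inequality to equality by exploiting the convexity of $R(D)$.

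First I would use Lemma \ref{lemma: transform the rate-distortion problem} together with Proposition \ref{proposition: the target measure has finite t-Wasserstein distance} to rewrite the problem as $R(D)=\inf_{\nu\in\pc_t(\Y)}\inf_{\pi\in\Pi(\mu,\nu),\,\E_\pi\rho\leqslant D}I(X;Y)$, and dualize the distortion constraint with multiplier $\beta\geqslant 0$; the standard weak-duality estimate produces the inequality $R(D)\geqslant \inf_\nu\sup_\beta J(\nu,\beta)$, which is the second line of the theorem once the bound is shown to be tight.

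Second, for fixed $(\nu,\beta)$, Theorem \ref{thm: refined existence} furnishes a minimizer $\pi^\star\in\Pi(\mu,\nu)$ of $J(\nu,\beta)$. Because the reference measure $\gamma$ with $d\gamma=Ke^{-\beta\rho}d\mu d\nu$ is equivalent to $\mu\times\nu$, Lemma \ref{lemma: explicit expression} gives the factorization $d\pi^\star/d\gamma=f(x)g(y)$ with $f,g$ obeying the Schr\"odinger equations \eqref{Schr\"odinger equations}. I would substitute $D_{KL}(\pi^\star\|\gamma)=\int\log f\,d\mu+\int\log g\,d\nu$ into the decomposition of $I(X;Y)+\beta(\E_\pi\rho-D)$ already displayed in the paper, eliminate $f$ via $f(x)=\bigl(K\int g(y)e^{-\beta\rho}d\nu\bigr)^{-1}$, and cancel the $\log K$ contributions against $\int\log(d\gamma_0/d\mu)\,d\mu$. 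After rearrangement one obtains $J(\nu,\beta)=-\int\log\bigl(\int e^{-\beta\rho}d\nu\bigr)d\mu-\beta D+L(\nu,\beta)$, which is exactly the Schr\"odinger-bridge form appearing in the third line of the theorem.

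Third, I would upgrade weak duality to strong duality. Since $R(D)$ is convex and non-increasing, for every $D>D_{\min}$ the subdifferential $\partial R(D)$ (in the paper's sign convention) is nonempty, and any $\beta\in\partial R(D)$ satisfies $R(D')+\beta D'\geqslant R(D)+\beta D$ for all $D'\geqslant 0$. For such $\beta$ applied to a near-minimizer of the RD problem at level $D$, the Lagrangian becomes tight, so $\sup_{\beta\geqslant 0}J(\nu,\beta)$ is attained on $\partial R(D)$ and equals $R(D)$ at the optimal $\nu$. Moreover, for every $\beta\in\partial R(D)$ the inner expression $-\int\log(\int e^{-\beta\rho}d\nu)d\mu-\beta D+L(\nu,\beta)$ shares the common optimum value $R(D)$, which simultaneously validates the $\max_\beta\inf_\nu$ swap and the last line asserting $\beta$-independence.

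The main obstacle will be this strong-duality step, namely verifying that the outer supremum over $\beta$ is actually attained (on $\partial R(D)$) and equals $R(D)$ rather than merely bounding it. I plan to handle this by leveraging the attainment of $\pi^\star$ — which makes the Lagrangian pointwise tight — in conjunction with the standard subdifferential characterization of optimal multipliers for convex constrained problems, thereby sidestepping any need to invoke abstract minimax theorems of Sion type.
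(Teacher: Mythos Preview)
Your proposal is correct and follows essentially the same route as the paper: reduce to $\inf_\nu\sup_\beta J(\nu,\beta)$ via Lemma \ref{lemma: transform the rate-distortion problem} and Proposition \ref{proposition: the target measure has finite t-Wasserstein distance}, evaluate $J(\nu,\beta)$ in closed form using the Schr\"odinger factorization from Lemma \ref{lemma: explicit expression}, and close the duality gap via the subgradient inequality for the convex function $R(D)$. The only place where you are vaguer than the paper is the lower bound in step three: rather than speaking of ``near-minimizers of the RD problem'' and ``the optimal $\nu$'', the paper argues pointwise that for every $\nu$ and every $\beta_D\in\partial R(D)$, attainment of $\pi^\star$ gives $J(\nu,\beta_D)=I(X;Y)|_{\pi^\star}+\beta_D(\E_{\pi^\star}\rho-D)\geqslant R(\E_{\pi^\star}\rho)+\beta_D(\E_{\pi^\star}\rho-D)\geqslant R(D)$, which immediately yields $\inf_\nu J(\nu,\beta_D)\geqslant R(D)$ without assuming any optimal $\nu$ exists.
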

\begin{proof}
See Appendix \ref{appendix: proof of thm: parametric representation-1}.
\end{proof}
If we further assume the existence of the optimal reconstruction $\nu^\star$, $L(\nu,\beta)$ equals $0$ so that the parametric representation \eqref{simplified pr} can be reproduced.
\begin{theorem}\label{thm: parametric representation-2}
    Under the set of Theorem \ref{thm: parametric representation-1}, if additionally, there exists an optimal reconstruction $\nu^\star$, then the corresponding optimal joint distribution satisfies that
    \begin{equation}\label{eq: optimal joint distribution}
    d\pi^\star = \dfrac{e^{-\beta\rho(x,y)}}{\int_{\Y}e^{-\beta\rho(x,y)}d\nu^\star}d\mu\times d\nu^\star
    \end{equation}
    which is equivalent to that $g(y)$ in Lemma \ref{lemma: explicit expression} is a constant, namely, $L(\nu^{\star},\beta)=0$ in Theorem \ref{thm: parametric representation-1}. Then the RD function $R(D)$ can be represented as
    \begin{equation}\label{simplified pr}
        \begin{aligned}
        R(D)&=\inf\limits_{\nu}
        \left\{
            -\int_{\X}\log\left(\int_{\Y}e^{-\beta\rho(x,y)}d\nu\right)d\mu-\beta D
        \right\}\\ 
        &=-\int_{\X}\log\left(\int_{\Y}e^{-\beta\rho(x,y)}d\nu^{\star}\right)d\mu-\beta D
        \end{aligned}
    \end{equation}
    holding for any $\beta\in\partial R(D)$ and $\nu^{\star}$ does not change even though $\beta$ can vary in $\partial R(D)$.
\end{theorem}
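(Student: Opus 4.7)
The plan is to convert the double infimum into a single free optimization over couplings with $X$-marginal $\mu$, apply the Donsker-Varadhan variational formula to pinpoint the form of any minimizer, and then invoke Theorem \ref{thm: parametric representation-1} and Lemma \ref{lemma: explicit expression} to force $L(\nu^\star,\beta)=0$.

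First I would observe, via Lemma \ref{lemma: transform the rate-distortion problem}, that
\[
\inf_{\nu\in\pc_t(\Y)}J(\nu,\beta)=\inf_{\pi:\,\pi_X=\mu}\bigl[I(X;Y)+\beta(\E_\pi\rho-D)\bigr],
\]
so the existence of an optimal reconstruction $\nu^\star$ corresponds to a joint $\pi^\star$ with $\pi^\star_X=\mu$ and $\pi^\star_Y=\nu^\star$ attaining this free infimum. Next I would apply the Gibbs/Donsker-Varadhan inequality: for each $x$ and any probability $q\in\pc(\Y)$,
\[
D_{KL}(\pi_x\|q)+\beta\int_\Y\rho(x,y)d\pi_x\geqslant -\log\int_\Y e^{-\beta\rho(x,y)}dq,
\]
with equality iff $d\pi_x=e^{-\beta\rho(x,\cdot)}/Z_q(x)\,dq$, where $Z_q(x)=\int_\Y e^{-\beta\rho(x,y)}dq$. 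Choosing $q=\pi_Y$ and integrating in $\mu$ gives the sharp bound $I(X;Y)+\beta\E_\pi\rho\geqslant -\int_\X\log Z_{\pi_Y}(x)d\mu$, whose equality case forces the free minimizer $\pi^\star$ to take the product form \eqref{eq: optimal joint distribution} with $\nu^\star=\pi^\star_Y$.

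Having established \eqref{eq: optimal joint distribution}, I would verify the equivalence with $g\equiv\text{const}$ in Lemma \ref{lemma: explicit expression}. Writing $d\gamma=Ke^{-\beta\rho}d\mu\times d\nu^\star$, the Radon-Nikodym derivative $d\pi^\star/d\gamma=1/(KZ(x))$ depends on $x$ only, so by the factorization $d\pi^\star/d\gamma=f(x)g(y)$ and the uniqueness of the Schr\"odinger potentials up to a multiplicative constant, $g$ must be a constant. The converse is immediate: if $g\equiv c$, then $\pi^\star_X=\mu$ forces $cf(x)=1/(KZ(x))$, recovering \eqref{eq: optimal joint distribution}. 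Substituting $g\equiv c$ into the definition of $L(\nu,\beta)$ yields $L(\nu^\star,\beta)=-\log c+\log c=0$, and plugging this into the parametric representation of Theorem \ref{thm: parametric representation-1} produces \eqref{simplified pr}. The $\beta$-independence of $\nu^\star$ then follows from standard convex duality: since the original RD problem $R(D)=\inf I(X;Y)$ does not involve $\beta$, every $\beta\in\partial R(D)$ makes the Lagrangian $J(\cdot,\beta)$ share the same primal optimizer $\pi^\star$, hence the same $Y$-marginal $\nu^\star$.

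The main obstacle is the rigorous identification in the equality case of the Donsker-Varadhan bound. Care is needed to (i) ensure the relevant densities are well-defined (handled by the finiteness of $K$ and $Z(x)$ already discussed in the remark following the definition of $J$), (ii) pass from the pointwise equality $d\pi^\star_x=e^{-\beta\rho(x,\cdot)}/Z(x)\,d\nu^\star$ for $\mu$-a.e. $x$ to the global product form \eqref{eq: optimal joint distribution}, and (iii) verify the self-consistency equation $\int_\X e^{-\beta\rho(x,y)}/Z(x)\,d\mu=1$ for $\nu^\star$-a.e. $y$, which is precisely the Schr\"odinger equation \eqref{Schr\"odinger equations} specialized to $g\equiv 1$ and is the fixed-point condition enforced by $\nu^\star$ being the outer minimizer.
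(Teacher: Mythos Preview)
Your overall route coincides with the paper's: collapse the double infimum to the free problem $\inf_{\pi:\pi_X=\mu}[I(X;Y)+\beta\E_\pi\rho]$, identify the product form of its minimizer, and read off $g\equiv\text{const}$, $L(\nu^\star,\beta)=0$, and \eqref{simplified pr}. The difference is that the paper outsources the identification step to Lemma~1.4 in Csisz\'ar's 1974 paper, while you try to do it directly via Donsker--Varadhan; this is a sensible and more self-contained substitution.

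There is, however, a real gap in your Donsker--Varadhan step. The inequality $I(X;Y)+\beta\E_\pi\rho\geqslant-\int_\X\log Z_{\pi_Y}(x)\,d\mu$ has a $\pi$-dependent right-hand side, so it is \emph{not} a global lower bound that the free minimizer $\pi^\star$ automatically saturates; you simply assert that ``the equality case forces the free minimizer to take the product form'' without justifying why equality holds at $\pi^\star$. The missing ingredient is a comparison: given $\nu^\star=\pi^\star_Y$, form the Gibbs coupling $\tilde\pi$ with $d\tilde\pi_x=e^{-\beta\rho(x,\cdot)}/Z_{\nu^\star}(x)\,d\nu^\star$, and use the identity $\int_\X D_{KL}(\tilde\pi_x\|\nu^\star)\,d\mu=I_{\tilde\pi}(X;Y)+D_{KL}(\tilde\pi_Y\|\nu^\star)$ to obtain
\[
I_{\tilde\pi}(X;Y)+\beta\E_{\tilde\pi}\rho\;\leqslant\;-\int_\X\log Z_{\nu^\star}(x)\,d\mu\;\leqslant\;I_{\pi^\star}(X;Y)+\beta\E_{\pi^\star}\rho.
\]
Minimality of $\pi^\star$ then forces both inequalities to be equalities, yielding the product form \eqref{eq: optimal joint distribution} together with the self-consistency $\tilde\pi_Y=\nu^\star$ you flag in point (iii). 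This same comparison also gives $L(\nu,\beta)\geqslant 0$ for all $\nu$, which you need (but do not state) to pass from the last line of Theorem~\ref{thm: parametric representation-1} to the first line of \eqref{simplified pr}. The paper avoids all of this by first using the OWT existence theorem to produce a minimizer of the \emph{constrained} inner problem over $\Pi(\mu,\nu^\star)$, then arguing by a one-line contradiction that this coincides with the free minimizer, and finally citing Csisz\'ar's lemma for the product form.
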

\begin{proof}
See Appendix \ref{appendix: proof of thm: parametric representation-2}.
\end{proof}
By \eqref{eq: optimal joint distribution} along with K. Rose's methods of using the completeness of Hermite polynomials, we can reproduce the following conclusion.

\begin{corollary}\label{corollary: K.Rose's conclusion about the achievability of SLB}[Theorem 2 and Corollary 2 in \cite{rose1994mapping}]
For $\X=\Y=\R^n$ and $\rho(x,y)=\norm{x-y}^2$, if the support of the optimal reproduction random variable has an accumulation point, then the RD function coincides with the SLB. 

If the SLB does not hold with equality, then the support of the optimal reproduction random variable consists of isolated singularities. Further, if this support is bounded, then Y is discrete and finite.
\end{corollary}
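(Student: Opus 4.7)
The plan is to exploit the explicit product-form structure of the optimal joint distribution given in Theorem \ref{thm: parametric representation-2} and translate the marginal constraint on $\nu^\star$ into a real-analytic identity strong enough to pin down the backward test channel as Gaussian, thereby forcing the SLB to be tight. First, with $\rho(x,y)=\|x-y\|^2$, equation \eqref{eq: optimal joint distribution} gives
$$d\pi^\star(x,y) = \dfrac{e^{-\beta\|x-y\|^2}}{Z(x)}d\mu(x)d\nu^\star(y),\quad Z(x):=\int_{\Y}e^{-\beta\|x-y\|^2}d\nu^\star(y),$$
and the marginal constraint $\pi^\star_{Y}=\nu^\star$ collapses to
$$f(y):=\int_{\R^n}\dfrac{e^{-\beta\|x-y\|^2}}{Z(x)}d\mu(x)=1,\quad \text{for } \nu^\star\text{-a.e. } y.$$

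Next I would show that $f$ is real-analytic on $\R^n$. Writing $f(y)=e^{-\beta\|y\|^2}\,\tilde f(y)$ with $\tilde f(y)=\int_{\R^n}e^{2\beta\langle x,y\rangle}\,e^{-\beta\|x\|^2}/Z(x)\,d\mu(x)$, the map $\tilde f$ is a Laplace transform of a finite Borel measure (finiteness follows since $Z(x)$ is bounded below by the Gaussian kernel restricted to any fixed ball intersecting $\mathrm{supp}(\nu^\star)$). Standard differentiation-under-the-integral arguments, together with dominated convergence on thin complex strips, extend $\tilde f$ to an entire function on $\mathbb{C}^n$. Because $\R^n$ is connected, the identity theorem for real-analytic functions then forces $f\equiv 1$ on $\R^n$ as soon as $\{f=1\}$ contains a set with an accumulation point, in particular whenever $\mathrm{supp}(\nu^\star)$ has an accumulation point.

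Once $f\equiv 1$, we have the convolution identity $\int e^{-\beta\|x-y\|^2}\,d\sigma(x)=1$ with $d\sigma(x):=d\mu(x)/Z(x)$. This is the hard step: one must deduce that $\sigma$ is a constant multiple of Lebesgue measure. I would argue this either via Fourier analysis on tempered distributions (the Gaussian has a non-vanishing entire Fourier transform, so the Gaussian convolution operator is injective on finite measures, and convolving with a constant forces $\hat\sigma$ to be a multiple of $\delta$), or, following Rose \cite{rose1994mapping}, by projecting against Hermite polynomials, which form a complete system in the relevant weighted $L^2$ space. Either route yields $d\mu(x)=C\,Z(x)\,dx$ for a constant $C>0$, hence
$$d\pi^\star(x,y)=C\,e^{-\beta\|x-y\|^2}\,dx\,d\nu^\star(y),$$
which means $X=Y+N$ with $Y\sim\nu^\star$ independent of $N\sim\mathcal{N}(0,\tfrac{1}{2\beta}I_n)$. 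For the squared-error distortion this is precisely the equality case of the Shannon lower bound, so $R(D)=h(X)-\tfrac{n}{2}\log(2\pi eD/n)$ as claimed.

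For the second and third assertions I would argue by contrapositive: if the SLB is strict then, by what we just proved, $\mathrm{supp}(\nu^\star)$ cannot have an accumulation point, so it consists of isolated points in $\R^n$. If in addition $\mathrm{supp}(\nu^\star)$ is bounded, it is compact, and a compact set of isolated points is finite; thus $Y$ is discrete and takes finitely many values. The main obstacle I anticipate is the rigorous justification of the deconvolution step: the measure $\sigma$ need not be tempered a priori, so one must either verify temperedness using growth estimates on $Z$, or invoke the Hermite polynomial argument of Rose to stay within an explicit weighted $L^2$ framework where completeness handles injectivity directly.
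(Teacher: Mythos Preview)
The paper does not provide a self-contained proof of this corollary; it simply asserts that once \eqref{eq: optimal joint distribution} is established, Rose's argument via the completeness of Hermite polynomials applies. Your outline faithfully reconstructs what that entails: derive the marginal identity $f(y)=1$ on $\mathrm{supp}(\nu^\star)$ from \eqref{eq: optimal joint distribution}, propagate it to all of $\R^n$ by analyticity, then deconvolve the Gaussian. Offering Fourier injectivity of the Gaussian as an alternative to Hermite completeness for the deconvolution step is a mild variation but does not change the strategy, so in terms of approach you are aligned with what the paper indicates.

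There is, however, a genuine gap in the analyticity step when $n\geq 2$. The identity principle you invoke (``connected domain plus an accumulation point of the level set implies constant'') is a one-variable fact. In several real or complex variables a real-analytic function can vanish on a set with accumulation points without vanishing identically; any nonconstant polynomial on $\R^2$ with a nontrivial zero variety already furnishes a counterexample. Hence ``$\mathrm{supp}(\nu^\star)$ has an accumulation point'' does \emph{not} by itself force $f\equiv 1$ on $\R^n$ for $n\geq 2$, and your argument as written is valid only for $n=1$, which is the regime in which Rose's accumulation-point criterion is most naturally read. For higher dimensions you would need either a stronger hypothesis on $\mathrm{supp}(\nu^\star)$ (e.g.\ nonempty interior, or not contained in any proper analytic hypersurface) or a route that exploits the specific Laplace-transform/Gaussian structure of $\tilde f$ beyond bare analyticity. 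The contrapositive conclusions in your final paragraph are fine once the first assertion is secured.
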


\section{Discussion and Future Work}
In this paper, we revisit RD theory from the perspective of OWT. Within the framework of OWT, we derive a parametric representation of the RD function based on the existence theorem of OWT, thereby linking the RD function to the Schr\"odinger bridge problem. Building upon this, we provide a concise proof of the optimality condition \eqref{eq: optimal joint distribution} under the existence of an optimal reconstruction. Potential directions for future research include relaxing the current assumptions and exploring novel numerical algorithms to solve the RD problem. This can be inspired by the numerical techniques developed for solving the Schrödinger bridge problem, as discussed in \cite{pooladian2024plug,caluya2021wasserstein,marino2020optimal}.  

\bibliographystyle{IEEEtran}
\bibliography{references}









\newpage
\onecolumn
\appendices

\section{Proof of Lemma \ref{lemma: transform the rate-distortion problem}}
\label{appendix: proof of lemma: transform the rate-distortion problem}

For any fixed $\pi_x\in\pc(\Y)$ with $\E_{\pi}\rho\leq D$, one can always find the corresponding $\nu\in\pc(\Y)$ and $\pi\in\Pi(\mu,\nu)$, where for any Borel measurable function $f:\X\times\Y\to[0,\infty]$
    \begin{equation*}
        \begin{aligned}
          \int_{\X\times\Y} f(x,y)d\pi(x,y)&=\int_{\X}d\mu(x)\int_{\Y} f(x,y)d\pi_x(y)\\
          d\nu(y)&=\int_{\X}d\pi(x,y).
        \end{aligned}
    \end{equation*}
 Hence we can obtain that
 $$\left.I(X;Y)\right|_{\pi_x}\geqslant \inf_{\nu,\pi:\E_{\pi}\rho\leqslant D} I(X;Y)$$
 that is,
 $$\inf_{\pi_x:\E_{\pi}\rho\leqslant D}I(X;Y)\geqslant \inf_{\nu,\pi:\E_{\pi}\rho\leqslant D} I(X;Y).$$
 Additionally, for all fixed $\nu,\pi:\E_{\pi}\rho\leqslant D$, one can also determine the disintegration $\pi_x$ of $\pi$, and we have
 $$\left.I(X;Y)\right|_{\nu,\pi}\geqslant \inf_{\pi_x:\E_{\pi}\rho\leqslant D} I(X;Y)$$
 that is,
 $$\inf_{\nu,\pi:\E_{\pi}\rho\leqslant D} I(X;Y)\geqslant \inf_{\pi_x:\E_{\pi}\rho\leqslant D} I(X;Y)\Longrightarrow \inf_{\nu,\pi:\E_{\pi}\rho\leqslant D} I(X;Y)=\inf_{\pi_x:\E_{\pi}\rho\leqslant D} I(X;Y).$$
 
 Furthermore, we take the infimum sequence $\{\nu_k,\pi_k\}$ of $\inf_{\nu,\pi:\E_{\pi}\rho\leqslant D} I(X;Y)$, then
 \begin{equation*}
     \begin{aligned}
         \inf\limits_{\nu}\inf\limits_{\pi\in\Pi(\mu,\nu):\mathbb{E}_{\pi}\rho\leqslant D}I(X;Y)&\leqslant \inf\limits_{\nu}\inf\limits_{\pi\in\Pi(\mu,\nu_k):\mathbb{E}_{\pi}\rho\leqslant D}I(X;Y)\\
         &=\inf\limits_{\pi\in\Pi(\mu,\nu_k):\mathbb{E}_{\pi}\rho\leqslant D}I(X;Y)\\
         &\leqslant \left.I(X;Y)\right|_{(\nu_k,\pi_k)}.
     \end{aligned}
 \end{equation*}
 Hence 
 $$ \inf\limits_{\nu}\inf\limits_{\pi\in\Pi(\mu,\nu):\mathbb{E}_{\pi}\rho\leqslant D}I(X;Y)\leqslant \lim_{k\to\infty}\left.I(X;Y)\right|_{(\nu_k,\pi_k)}=\inf_{\nu,\pi:\E_{\pi}\rho\leqslant D} I(X;Y).$$
 Conversely, since for any fixed $\Tilde{\nu}\in\pc(\Y)$,
 $$\inf_{\nu,\pi:\E_{\pi}\rho\leqslant D} I(X;Y)\leqslant \inf_{\pi\in\Pi(\mu,\Tilde{\nu}):\E_{\pi}\rho\leqslant D} I(X;Y)$$
 we take the infimum on both sides over $\Tilde{\nu}$, and we have
 $$\inf_{\nu,\pi:\E_{\pi}\rho\leqslant D} I(X;Y)\leqslant \inf_{\Tilde{\nu}}\inf_{\pi\in\Pi(\mu,\Tilde{\nu}):\E_{\pi}\rho\leqslant D} I(X;Y)=\inf\limits_{\nu}\inf\limits_{\pi\in\Pi(\mu,\nu):\mathbb{E}_{\pi}\rho\leqslant D}I(X;Y).$$

\section{Proof of Lemma \ref{lemma: lower semicontinuity of C}}
\label{appendix: proof of lemma: lower semicontinuity of C}
\noindent \textbf{Step 1. Jointly lower semi-continuous: }

    For $(x_k,p_k)\to (x,p)$ in the sense of product topology, i.e. $x_n\to x\in \X$ and $p_n\to p\in \pc(\Y)$, we need to verify that $C(x,p)\leq \liminf\limits_{k\to\infty}C(x_k,p_k)$.
    On the one hand, we have
    \begin{equation*}
        \begin{aligned}
            \int_{\Y}\log\dfrac{dp}{d\nu} dp\leqslant \liminf_{k\to\infty}\int_{\Y}\log\dfrac{dp_k}{d\nu} dp_k
        \end{aligned}
    \end{equation*}
    by \textit{Theorem 1} in \cite{posner1975random} or \textit{Theorem 4.9} in \cite{Polyanskiy_Wu_2025}. On the other hand, $\displaystyle \beta\int_{\Y} (\rho(x,y)-D)dp$ is lower semicontinuous in $(x,p)$ by the Assumption \ref{assumption: lower semicontinuity of C}. Hence $C(x,p)$ is jointly lower semicontinuous in this case.
    \\
    \textbf{Step 2. Lower bounded: }
    \begin{equation*}
        \begin{aligned}
            C(x,p)&\coloneqq \int_{\Y}\left[\log\dfrac{dp}{d\nu}+\beta(\rho(x,y)-D)\right]dp\\
            &\geqslant \int_{\Y} \log\dfrac{dp}{d\nu}dp-\beta D\\
            &\geqslant -\beta D.
        \end{aligned}
    \end{equation*}
    Note that the relative entropy is always non-negative since
    \begin{equation*}
        \begin{aligned}
            \int_{\Y}\log\dfrac{dp}{d\nu}dp&=\int_{\Y}\varphi(\dfrac{dp}{d\nu})d\nu\\
             &\geqslant \varphi\left(\int_{\Y}\dfrac{dp}{d\nu}d\nu\right)\\
             &=\varphi(1)=0.
        \end{aligned}
    \end{equation*}
    Here $\varphi(x)=x\log x$ is a convex function over $(0,+\infty)$.
    \\
    \textbf{Step 3. Convex in the second argument: }
    
    For any $p_1,p_2\in\pc(\Y)$ and $\lambda\in[0,1]$, we have
    \begin{equation*}
        \begin{aligned}
    C&\left( x,\lambda p_1+ \left( 1-\lambda \right) p _2 \right) - \lambda C\left( x,p _1 \right) +\left( 1-\lambda \right) C\left( x,p _2 \right) \\
    &=
    \int_{\Y}\left[
        \log\dfrac{\lambda p_1+(1-\lambda p_2)}{d\nu}+\beta \rho(x,y)
    \right](\lambda dp_1+(1-\lambda)dp_2)
    \\
    &\;\; - \lambda \int_Y{\,\,}\left[ \log \frac{dp _1}{d\nu}+\beta \rho \left( x,y \right) \right] dp _1+\left( 1-\lambda \right) \int_Y{\,\,}\left[ \log \frac{dp _2}{d\nu}+\beta \rho \left( x,y \right) \right] \,\,dp _2\\
    &=
    \int_Y\log \frac{\lambda dp _1+\left( 1-\lambda \right) dp _2}{d\nu}\,\,\left( \lambda \,\,dp _1+\left( 1-\lambda \right) \,\,dp _2 \right) - \lambda \int_Y{\,\,}\log \frac{dp _1}{d\nu}\,\,dp _1+\left( 1-\lambda \right) \int_Y{\,\,}\log \frac{dp _2}{d\nu}\,\,dp _2\\
    &=\int_{\Y}\varphi(\lambda\dfrac{dp_1}{d\nu}+(1-\lambda)\dfrac{dp_2}{d\nu})d\nu-\lambda\int_{\Y}\varphi(\dfrac{dp_1}{d\nu})d\nu-(1-\lambda)\int_{\Y}\varphi(\dfrac{dp_2}{d\nu})d\nu\\
    &\leqslant 0.
        \end{aligned}
    \end{equation*}

\section{Proof of Theorem \ref{thm: refined existence}}
\label{appendix: proof of refined existence}

The lower bounded property and the convexity in the second argument of $C(x,p)$ over $\pc_t(\Y)$ can be proved by the same method as in Lemma \ref{lemma: lower semicontinuity of C}. Hence we only need to prove the jointly lowe semi-continuity of $C(x,p)$ over $\X\times\pc_t(\Y)$. On the one hand, we have
\begin{equation*}
    \begin{aligned}
        \int_{\Y}\log\dfrac{dp}{d\nu} dp\leqslant \liminf_{k\to\infty}\int_{\Y}\log\dfrac{dp_k}{d\nu} dp_k
    \end{aligned}
\end{equation*}
by \textit{Theorem 1} in \cite{posner1975random} or \textit{Theorem 4.9} in \cite{Polyanskiy_Wu_2025}. On the other hand, $\displaystyle \beta\int_{\Y} (\rho(x,y)-D)dp$ is lower semicontinuous in $(x,p)\in \X\times\pc_t(\Y)$ by the Assumption \ref{assumption: refined lower semicontinuity of C}. Hence $C(x,p)$ is jointly lower semicontinuous in this case. 

Furthermore, by Theorem \ref{thm:existence}, we know that the minimizer $\pi^\star$ of $J(\nu,\beta)$ exists.

\section{Proof of Proposition \ref{proposition: the target measure has finite t-Wasserstein distance}}
\label{appendix: proof of proposition 1}

By the property of Wasserstein distance, we have
\begin{equation*}
    \begin{aligned}
        W_t(\nu,\delta_{y_0})&=\left(\int_{\X}d_{\X}(x,y_0)^t\right)^{1/t}\\ 
        &\leqslant W_t(\mu,\delta_{y_0})+W_t(\mu,\nu).
    \end{aligned}
\end{equation*}
Note that
\begin{equation*}
    \begin{aligned}
        W_t(\mu,\nu)&=\inf_{\pi\in\Pi(\mu,\nu)}\left[\E_{\pi}(d_{X}^t)\right]^{1/t}\\ 
        &\leqslant c^{-1/t}  \inf_{\pi\in\Pi(\mu,\nu)}\left[\E_{\pi}\rho\right]^{1/t}\\
        &\leqslant c^{-1/t} D^{1/t}.
    \end{aligned}
\end{equation*}
Hence we have
$$W_t(\nu,\delta_{y_0})\leqslant c^{-1/t}D^{1/t}+W_t(\mu,\delta_{y_0})<+\infty.$$

\section{Proof of Theorem \ref{thm: parametric representation-1}}
\label{appendix: proof of thm: parametric representation-1}

By Lemma \ref{lemma: explicit expression}, we know that the minimizers $\pi_x^{\star},\pi^{\star}$ exist and there exist nonnegative measurable functions $f,g$ such that
$$\dfrac{d\pi^\star}{d\gamma}=f(x)g(y), \gamma-a.e..$$
Namely,
$$d\pi^{\star}=Kf(x)g(y)e^{-\beta \rho(x,y)}d\mu(x)d\nu(y).$$
Then we have:
\begin{equation*}
	\begin{aligned}
    d\pi_x^{\star}&=Kf(x)g(y)e^{-\beta\rho(x,y)}d\nu=\dfrac{K}{K_x}f(x)g(y)d\gamma_x\\
    \dfrac{d\pi_x^\star}{d\gamma_x}&=\dfrac{Kf(x)g(y)}{K_x}.
\end{aligned}
\end{equation*}
Since $\pi^\star\in \Pi(\mu,\nu)$, we have $\int_{\Y}d\pi^\star=d\mu, \int_{\X}d\pi^\star=d\nu$, which can be written as Schr\"odinger equations\cite{nutz2021introduction}
\begin{equation}
    \begin{aligned}
       K&=\left(\int_{\X\times\Y}e^{-\beta\rho(x,y)}d\mu d\nu\right)^{-1}=\left(\int_{\X\times\Y}f(x)g(y)e^{-\beta\rho(x,y)}d\mu d\nu\right)^{-1}\\ 
       1&=Kf(x)\int_{\Y}g(y)e^{-\beta\rho(x,y)}d\nu=Kg(y)\int_{\X}f(x)e^{-\beta\rho(x,y)}d\mu \\ 
       \Longrightarrow& \int_{\X}\dfrac{g(y')e^{-\beta\rho(x,y')}}{\int_{\Y}g(y)e^{-\beta\rho(x,y)}d\nu}d\mu =1,\quad\text{$\forall y'\in\Y$}.
    \end{aligned}
\end{equation}
Then we have
\begin{equation}\label{eq: explicit expression of J(v,b)}
    \begin{aligned}
        J(\nu,\beta)&=\int_{\X\times\Y}\log\dfrac{d\pi^\star}{d\gamma}d\pi^\star+\log K-\beta D\\ 
        &=\int_{\X\times\Y} \log (f(x)g(y))d\pi^\star+\log K -\beta D\\ 
        &=\int_{\X} \log (Kf(x))d\mu+\int_{\Y}\log(g(y)) d\nu-\beta D\\
        &=-\int_{\X}\log\left(\int_{\Y}g(y)e^{-\beta\rho(x,y)}d\nu\right)d\mu+\int_{\Y}\log g(y)d\nu-\beta D\\ 
        &=-\int_{\X}\log\left(\int_{\Y}e^{-\beta\rho(x,y)}d\nu\right)d\mu-\beta D +\int_{\X}\log\left(\dfrac{\int_{\Y}e^{-\beta\rho(x,y)}d\nu}{\int_{\Y}g(y)e^{-\beta\rho(x,y)}d\nu}\right)d\mu +\int_{\Y}\log g(y)d\nu
    \end{aligned}
\end{equation}
\begin{equation*}
	\begin{aligned}
		R(D)
		&=\inf\limits_{\nu}\inf\limits_{\pi\in\Pi(\mu,\nu):\mathbb{E}_{\pi}\rho\leqslant D}I(X;Y)\\
		&\geqslant 
		\inf\limits_{\nu}\sup_{\beta\geqslant 0}\inf\limits_{\pi\in\Pi(\mu,\nu)}I(X;Y)+\beta(\mathbb{E}_{\pi}\rho-D)\\
		&=\inf\limits_{\nu}\sup_{\beta\geqslant 0}
		\left\{
			-\int_{\X}\log\left(\int_{\Y}e^{-\beta\rho(x,y)}d\nu\right)d\mu-\beta D+\int_{\X}\log\left(\dfrac{\int_{\Y}e^{-\beta\rho(x,y)}d\nu}{\int_{\Y}g(y)e^{-\beta\rho(x,y)}d\nu}\right)d\mu +\int_{\Y}\log g(y)d\nu
		\right\}\\ 
		&\geqslant 
		\sup_{\beta\geqslant 0}\inf\limits_{\nu}
		\left\{
			-\int_{\X}\log\left(\int_{\Y}e^{-\beta\rho(x,y)}d\nu\right)d\mu-\beta D+\int_{\X}\log\left(\dfrac{\int_{\Y}e^{-\beta\rho(x,y)}d\nu}{\int_{\Y}g(y)e^{-\beta\rho(x,y)}d\nu}\right)d\mu +\int_{\Y}\log g(y)d\nu
		\right\}.
	\end{aligned}
\end{equation*}
Define
$$L(\beta)=\inf\limits_{\nu}
\left\{
    -\int_{\X}\log\left(\int_{\Y}e^{-\beta\rho(x,y)}d\nu\right)d\mu+\int_{\X}\log\left(\dfrac{\int_{\Y}e^{-\beta\rho(x,y)}d\nu}{\int_{\Y}g(y)e^{-\beta\rho(x,y)}d\nu}\right)d\mu +\int_{\Y}\log g(y)d\nu
\right\}.$$
Then for any $\beta\geqslant 0$
\begin{equation}
    \begin{aligned}
        R(D)&\geqslant L(\beta)-\beta D\\ 
        R(D)+\beta D&\geqslant L(\beta).
    \end{aligned}
\end{equation}
It is well known that $R(D)$ is a non-increasing convex function of $D$, hence it has second derivatives almost everywhere, and for any $D\geqslant D_{\min}$ there exists subgradient of $R(D)$, denoted by $\partial R(D)$, which is always a closed, convex and bounded set for convex function $R(D)$. Then for each $-\beta_{D}\in\partial R(D)$, we have
$$R(D')\geqslant R(D)-\beta_D (D'-D), \;\;\forall D'\in (D_{\min}, D_{\max})$$
that is,
$$R(D')+\beta_D D'\geqslant R(D)+\beta_D D\geqslant L(\beta_D).$$
Additionally, by the definition of $L(\beta)$, for any sufficiently small $\epsilon>0$, there exists corresponding $\nu^{\epsilon}\in\pc(\Y)$, such that
$$-\int_{\X}\log\left(\int_{\Y}e^{-\beta_D\rho(x,y)}d\nu^{\epsilon}\right)d\mu+\int_{\X}\log\left(\dfrac{\int_{\Y}e^{-\beta_D\rho(x,y)}d\nu^{\epsilon}}{\int_{\Y}g(y)e^{-\beta_D\rho(x,y)}d\nu^{\epsilon}}\right)d\mu +\int_{\Y}\log g(y)d\nu^{\epsilon}<L(\beta_D)+\epsilon.$$
On the other hand, 
\begin{equation}
\begin{aligned}
    -\int_{\X}\log\left(\int_{\Y}e^{-\beta_D\rho(x,y)}d\nu^{\epsilon}\right)d\mu+&\int_{\X}\log\left(\dfrac{\int_{\Y}e^{-\beta_D\rho(x,y)}d\nu^{\epsilon}}{\int_{\Y}g(y)e^{-\beta_D\rho(x,y)}d\nu^{\epsilon}}\right)d\mu +\int_{\Y}\log g(y)d\nu^{\epsilon}\\ 
    &=\left.\left(I(X;Y)+\beta_D \E_{\pi}\rho\right)\right|_{\pi_x=\pi_x^{\star}(\nu^\epsilon)}\\ 
    &\geqslant R(\E_{\pi^\star(\nu^\epsilon)}\rho)+\beta_D \E_{\pi^\star(\nu^\epsilon)}\rho\\ 
    &\geqslant R(D)+\beta_D D.
\end{aligned}
\end{equation}
By equations above, we have
$$R(D)+\beta_D D-\epsilon <L(\beta_D)\leqslant R(D)+\beta_D D, \quad \forall \epsilon>0.$$
Furthermore, 
\begin{equation}
    \begin{aligned}
        L(\beta_D)&=R(D)+\beta_D D\\ 
        R(D)&\geqslant \sup_{\beta\geqslant 0}(L(\beta)-\beta D)\\ 
            &\geqslant \sup_{\beta_D\in\partial R(D)}(L(\beta_D)-\beta_D D)\\ 
            &=R(D).
    \end{aligned}
\end{equation}
\noindent Above all, we have proved that
\begin{equation}
	\begin{aligned}
R(D)&=\inf\limits_{\pi_x:\mathbb{E}_{\pi}\rho\leqslant D}I(X;Y)
=\inf\limits_{\nu}\inf\limits_{\pi\in\Pi(\mu,\nu):\mathbb{E}_{\pi}\rho\leqslant D}I(X;Y) 
=\inf\limits_{\nu,\pi\in\Pi(\mu,\nu):\mathbb{E}_{\pi}\rho\leqslant D}I(X;Y)\\ 
&=\inf\limits_{\nu}\sup_{\beta\geqslant 0}\inf\limits_{\pi\in\Pi(\mu,\nu)}I(X;Y)+\beta(\mathbb{E}_{\pi}\rho-D)\\
&=\inf\limits_{\nu}\sup_{\beta\geqslant 0}
\left\{
	-\int_{\X}\log\left(\int_{\Y}e^{-\beta\rho(x,y)}d\nu\right)d\mu-\beta D +\int_{\X}\log\left(\dfrac{\int_{\Y}e^{-\beta\rho(x,y)}d\nu}{\int_{\Y}g(y)e^{-\beta\rho(x,y)}d\nu}\right)d\mu +\int_{\Y}\log g(y)d\nu
\right\}\\ 
&=
\sup_{\beta\geqslant 0}\inf\limits_{\nu}
\left\{
	-\int_{\X}\log\left(\int_{\Y}e^{-\beta\rho(x,y)}d\nu\right)d\mu-\beta D+\int_{\X}\log\left(\dfrac{\int_{\Y}e^{-\beta\rho(x,y)}d\nu}{\int_{\Y}g(y)e^{-\beta\rho(x,y)}d\nu}\right)d\mu +\int_{\Y}\log g(y)d\nu
\right\}\\ 
&=\inf\limits_{\nu}\max_{\beta\in \partial R(D)}
\left\{
	-\int_{\X}\log\left(\int_{\Y}e^{-\beta\rho(x,y)}d\nu\right)d\mu-\beta D+\int_{\X}\log\left(\dfrac{\int_{\Y}e^{-\beta\rho(x,y)}d\nu}{\int_{\Y}g(y)e^{-\beta\rho(x,y)}d\nu}\right)d\mu +\int_{\Y}\log g(y)d\nu
\right\}\\ 
&=
\max_{\beta\in \partial R(D)}\inf\limits_{\nu}
\left\{
	-\int_{\X}\log\left(\int_{\Y}e^{-\beta\rho(x,y)}d\nu\right)d\mu-\beta D+\int_{\X}\log\left(\dfrac{\int_{\Y}e^{-\beta\rho(x,y)}d\nu}{\int_{\Y}g(y)e^{-\beta\rho(x,y)}d\nu}\right)d\mu +\int_{\Y}\log g(y)d\nu
\right\}\\ 
&=\inf\limits_{\nu}
\left\{
	-\int_{\X}\log\left(\int_{\Y}e^{-\beta\rho(x,y)}d\nu\right)d\mu-\beta D+L(\nu,\beta)
\right\},\;\;\forall \beta\in\partial R(D)
	\end{aligned}
\end{equation}
where 
$$L(\nu,\beta)=\int_{\X}\log\left(\dfrac{\int_{\Y}e^{-\beta\rho(x,y)}d\nu}{\int_{\Y}g(y)e^{-\beta\rho(x,y)}d\nu}\right)d\mu +\int_{\Y}\log g(y)d\nu.$$

\section{Proof of Theorem \ref{thm: parametric representation-2}}
\label{appendix: proof of thm: parametric representation-2}

We have known that for each $D>D_{\min}$ and each $\beta\in \partial R(D)$
\begin{equation}
\begin{aligned}
R(D)&=\inf_{\nu\in\pc_t(\Y)}\inf_{\pi\in\Pi(\mu,\nu)}\left\{ 
    \int_{\X\times\Y} \log\dfrac{d\pi}{d\mu\times d\nu} d\pi +\beta \int_{\X\times\Y}\rho(x,y)d\pi 
\right\}-\beta D.
\end{aligned}
\end{equation}
Define
\begin{equation}
    \begin{aligned}
        \Tilde{J}(\pi,\nu,\beta)&=D_{KL}(\pi\| \mu\times \nu)+\beta\E_{\pi}(\rho)=\int_{\X\times\Y}\log\dfrac{d\pi}{d\mu\times d\nu}d\pi+\beta\int_{\X\times\Y}\rho(x,y)d\pi\\ 
        f_{\nu,\beta}(x)&=\left( 
            \int_{\Y}e^{-\beta\rho(x,y)}d\nu
        \right)^{-1}\\ 
        \pi_{\nu,\beta}(dx,dy)&=f_{\nu,\beta}(x)e^{-\beta\rho(x,y)}\mu(dx)\times \nu(dy).
    \end{aligned}
    \end{equation}
Then we obtain
\begin{equation}
\begin{aligned}
R(D)&=\inf_{\nu}\inf_{\pi\in\Pi(\mu,\nu)} \Tilde{J}(\pi,\nu,\beta)-\beta D\\ 
&=\inf_{\pi\in\Pi(\mu,\cdot)} \Tilde{J}(\pi,\pi_1,\beta)-\beta D,\;\;\text{where $\pi_1$ is the $Y$-marginal of $\pi$.}
\end{aligned}
\end{equation}
Assume that the optimal reconstruction $\nu^{\star}$ exists, then we have
\begin{equation}
    \begin{aligned}
    R(D)&=\inf_{\pi\in\Pi(\mu,\nu^{\star})} \Tilde{J}(\pi,\nu^{\star},\beta)-\beta D\\ 
    &\xlongequal{OWT} \Tilde{J}(\pi^{\star},\nu^{\star},\beta)-\beta D.
    \end{aligned}
\end{equation}
On the other hand, 
\begin{equation}
    \begin{aligned}
    R(D)
    &=\inf_{\pi\in\Pi(\mu,\cdot)} \Tilde{J}(\pi, \pi_1, \beta)-\beta D,\;\;\text{where $\pi_1$ is the $Y$-marginal of $\pi$.}
    \end{aligned}
\end{equation}
If $\inf_{\pi\in\Pi(\mu,\cdot)} \Tilde{J}(\pi, \pi_1, \beta)$ can not be achieved, then there exists $\Tilde{\pi}$, such that 
\begin{equation}
    \begin{aligned}
    \Tilde{J}(\Tilde{\pi}, \Tilde{\pi}_1, \beta)<\Tilde{J}(\pi^{\star},\nu^{\star},\beta).
    \end{aligned}
\end{equation} 
Then we have
\begin{equation}
    \begin{aligned}
    R(D)+\beta D &= \Tilde{J}(\pi^\star,\nu^\star,\beta)> \Tilde{J}(\Tilde{\pi}, \Tilde{\pi}_1, \beta)\\ 
    &\geqslant \inf_{\pi\in\Pi(\mu,\cdot)} \Tilde{J}(\pi, \pi_1, \beta)\\ 
    &=R(D)+\beta D.
    \end{aligned}
\end{equation}
It is a contradiction. Hence $\inf_{\pi\in\Pi(\mu,\cdot)} \Tilde{J}(\pi, \pi_1, \beta)$ can be achieved at $\pi^\star,\nu^\star$ if the optimal reconstruction $\nu^{\star}$ exists. Furthermore, by Lemma 1.4 in \cite{csiszar1974extremum}, we have
\begin{equation}
    \begin{aligned}
    d\pi^\star &= \alpha(x)e^{-\beta\rho(x,y)}d\mu\times d\nu^\star\\ 
    &=\dfrac{e^{-\beta\rho(x,y)}}{\int_{\Y}e^{-\beta\rho(x,y)}d\nu^\star}d\mu\times d\nu^\star.
    \end{aligned}
\end{equation}

\end{document}